%
\documentclass[runningheads]{llncs}
\usepackage{graphicx}
%

\usepackage{amsmath}
\usepackage{amssymb}
\usepackage{xcolor}
\usepackage{mathtools}
\usepackage{hyperref}


%

\begin{document}
\title{Object-Centric Alignments}
%
%
\author{Lukas Liss\and
Jan Niklas Adams \and
Wil M.P. van der Aalst}
\authorrunning{L. Liss et al.}
%
\institute{Chair of Process and Data Science\\
RWTH Aachen University,
Aachen, Germany\\
\email{lukas.liss@rwth-aachen.de, \{niklas.adams,wvdaalst\}@pads.rwth-aachen.de}\\}
\maketitle              
\begin{abstract}
Processes tend to interact with other processes and operate on various objects of different types.
These objects can influence each other creating dependencies between sub-processes.
Analyzing the conformance of such complex processes challenges traditional conformance-checking approaches because they assume a single-case identifier for a process.
To create a single-case identifier one has to flatten complex processes.
This leads to information loss when separating the processes that interact on some objects.
This paper introduces an alignment approach that operates directly on these object-centric processes.
We introduce alignments that can give behavior-based insights into how closely related the event data generated by a process and the behavior specified by an object-centric Petri net are.
The contributions of this paper include a definition for object-centric alignments, an algorithm to compute them, a publicly available implementation, and a qualitative and quantitative evaluation.
The qualitative evaluation shows that object-centric alignments can give better insights into object-centric processes because they correctly consider inter-object dependencies.
Findings from the quantitative evaluation show that the run-time grows exponentially with the number of objects, the length of the process execution, and the cost of the alignment.
The evaluation results motivate future research to improve the run-time and make object-centric alignments more applicable for larger processes.

\keywords{Process mining \and Object-centric process mining \and Alignments.}
\end{abstract}
\section{Introduction}
\label{sec:Introduction}
Process mining provides insights into processes by analyzing event data generated by these processes.
When analyzing a process, one standard pipeline consists of extracting data, discovering a process model, and checking the conformance of the process with specifications~\cite{carmona_conformance_2018}.
This paper presents an approach to compute alignments to check the conformance of object-centric processes for which traditional conformance-checking methods fail to give correct insights.

Traditional process mining approaches depend on the assumption that a process is defined by a single case notion meaning that all actions created for one object define a process execution.
Processes in the real world tend not to fit that assumption.
An example of that is a typical supply chain process.
Supply processes are happening on raw materials, production processes on raw materials and products, shipping processes on products and orders, and payment processes operating on orders and customers.
One execution of the supply chain is not defined by a single object.
Real-world processes consist of multiple sub-processes operating on multiple object instances from various types.
These sub-processes can have synchronization points and long-term dependencies between different objects.

\begin{figure}[t]
    \centering
    \includegraphics[width=0.7\textwidth]{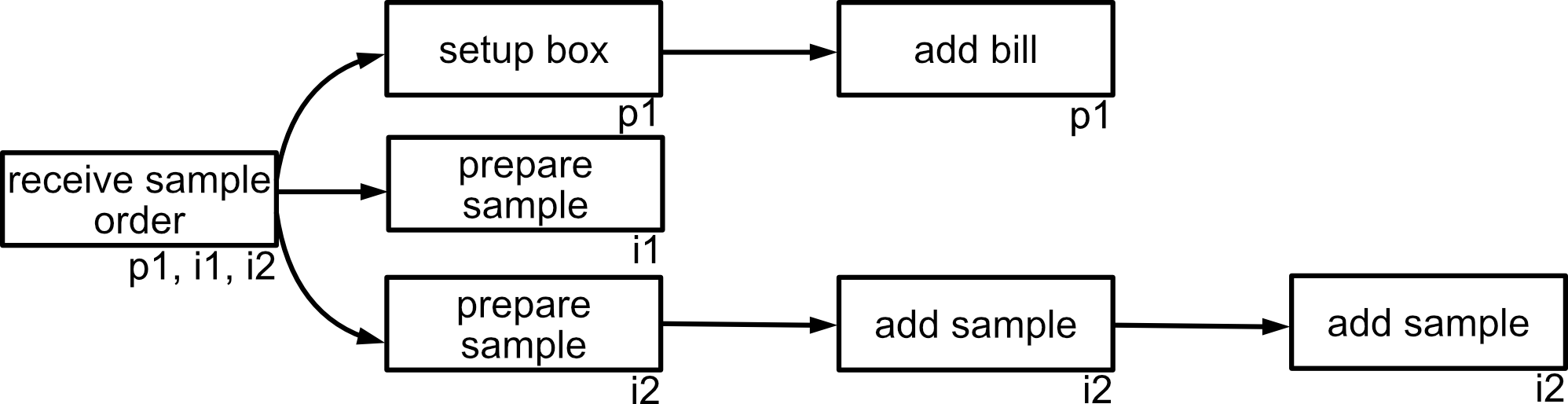}
    \caption{A process execution of our running examples. Events are associated with objects of type package (prefix p) or item (prefix i). The process execution describes the partial order of events induced by the individual objects.}
    \label{fig:px}
\end{figure}

\begin{figure}[t]
    \centering
    \includegraphics[width=\textwidth]{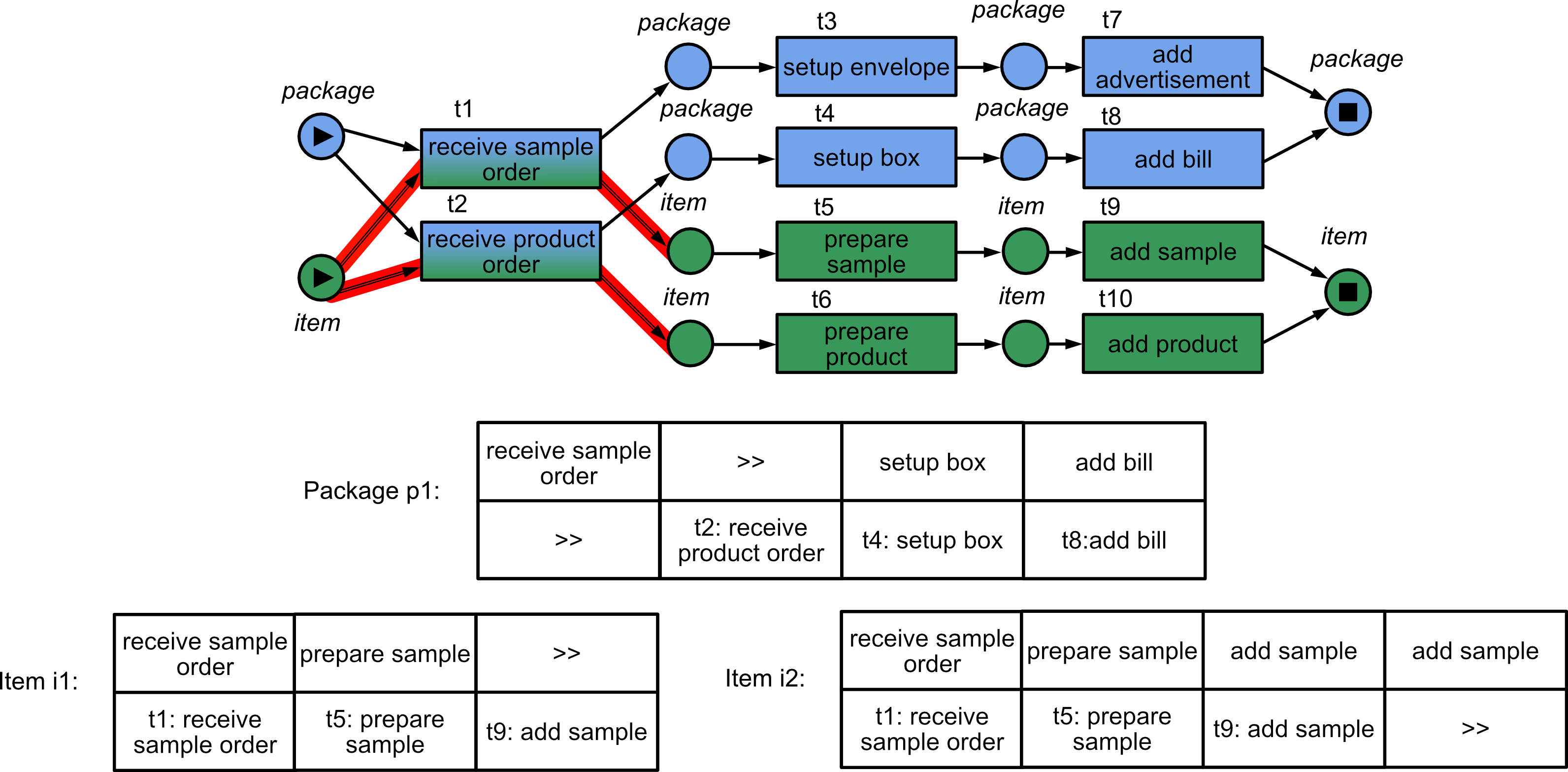}
    \caption{Top: De-jure model as an object-centric Petri net. Variable arcs in the model are marked red. Bottom: The process execution of \autoref{fig:px} is flattened to the individual objects and aligned to the de-jure model's subnet of the object's types.}
    \label{fig:pn-w-flattened}
\end{figure}

Recently, approaches have been made to generalize the notion of a process so that one can describe these complex processes.
Those approaches fall under the umbrella term of \textit{object-centric process mining}~\cite{DBLP:books/sp/22/Fahland22}.
Object-centric process mining generalizes traditional process mining techniques such that one does not follow one object through one process, but multiple objects through multiple, connected sub-processes.
This generalization increases the complexity.
In traditional process mining, one execution of a process is called \textit{case} and is defined by a sequence of events~\cite{van_der_aalst_process_2012}.
An object-centric \textit{process execution} is a graph describing the partial order between events in different sub-processes~\cite{adams_defining_2022}.
So far, a fitness notion based on replay has been proposed for object-centric conformance checking, but the approach is constrained to replayable behavior\cite{adams_precision_2021}. However, process owners are typically interested in aligning observed behavior to modeled behavior, finding deviations in their execution of the process, i.e., using alignments~\cite{adriansyah_aligning_2014}. The notion, calculation, implementation, and feasibility analysis for alignments on object-centric process mining are, so far, missing.

The running example used in this paper describes a packaging process with cross-object dependencies. In our example, one process execution refers to multiple items and one package. \autoref{fig:px} depicts a process execution as a graph of events describing the partial order between item and package objects.
\autoref{fig:pn-w-flattened} shows an object-centric Petri net~\cite{aalst_discovering_2020} of the packaging process.
An object-centric Petri net differs from traditional Petri nets by introducing place types and variable arcs. Tokens are typed and places of one type can only hold tokens of the same type. Variable arcs can consume an arbitrary amount of tokens.
Places are typed according to either item (green) or package (blue). We, furthermore, colored transitions with the colors of the types that are involved in this transition for clarity and highlighted the variable arcs with red.
In the given process, the path of the package depends on the path of items and vice-versa.
The package and the items can either be part of a sample order or a product order, but never both, depending on whether \textit{receive sample order} or \textit{receive product order} fires. Although the objects do not share any more events after the start, their allowed behavior still depends on whether they belong to a sample or product order.

If a process owner would like to find deviations in their object-centric processes today, they would need to \textit{flatten}~\cite{van_der_aalst_object-centric_2019} the observed process executions and apply traditional alignments to the object-centric Petri net's subnets of the same type.
We show this for our example process execution of \autoref{fig:px} in \autoref{fig:pn-w-flattened}.
If flattened to one trace per object, the three resulting traces get aligned to the type's subnet in a way that is not possible in the composed model.
As mentioned, activity \textit{receive sample order} and \textit{receive product order} can never happen both in one process execution since the de-jure model forces a decision between product orders and sample orders.
But the flattened alignments do not agree on which activity should happen.
The alignment for \textit{p1} has \textit{receive product order} in the model part whereas the alignments for \textit{i1} and \textit{i2} have \textit{receive sample order} in their model part.
As shown by the running example, computing alignments on object-centric processes requires more than just finding alignments for each object individually.
Aligning the sub-processes for all objects by respecting their object dependencies creates a computationally complex problem that we tackle in this paper.

This paper presents four contributions to enable and investigate object-centric alignments.
First, we generalize the notion of an alignment to object-centric processes.
Second, we present an algorithm to compute optimal object-centric alignments.
Third, we implemented our algorithm and make it publicly accessible as an open-source project\footnote[1]{https://github.com/LukasLiss/object-centric-alignments} based on the open-source object-centric process mining library \textsc{opca}~\cite{adams_ocpa_2022}.
Fourth, we evaluate the quality and the computation time of object-centric alignments on real-world event data.
Thereby, we gain insights into the scalability and suitability of the approach.

This paper is structured in the following way.
We present related work in \autoref{sec:Related} and preliminaries in \autoref{sec:Preliminaries}.
Then, we define object-centric alignments in \autoref{sec:alignment}.
Our algorithm to compute alignments consists of two parts:
constructing the synchronous product net (\autoref{sec:sny-prod-net}) and finding an optimal alignment in the synchronous product net (\autoref{sec:alignm-from-syn-prod-net}).
We present a qualitative and quantitative evaluation in \autoref{sec:evaluation} and conclude the paper in \autoref{sec:conclusion}.

\section{Related Work}
\label{sec:Related}
Process mining includes discovery, conformance checking, and enhancement of business processes \cite{van_der_aalst_process_2012}.
Our approach belongs to the category of conformance checking, where behavior from the event log is compared to allowed behavior that is specified by a de-jure model \cite{carmona_conformance_2018}.
For traditional processes, there exists a variety of conformance checking approaches \cite{dunzer_conformance_2019}.
The majority of them use either a token-based replay \cite{rozinat_conformance_2008} approach, or an alignment \cite{adriansyah_aligning_2014} approach.
Both have been used to derive quality metrics like precision \cite{adriansyah_alignment_2013} and fitness \cite{adriansyah_cost-based_2011}.
Unlike token-based replay, alignments are independent of the structure of the de-jure model \cite{adriansyah_aligning_2014}.
Like our calculation, the traditional alignment calculation defined by Adriansyah et al. uses a two-step algorithm to compute alignments \cite{adriansyah_aligning_2014}.
Adriansyah et al.'s approach creates a synchronous product net such that finding optimal alignments relates to finding a shortest path in that net.
This is a well-studied problem that can be solved with the Dijkstra \cite{dijkstra_note_1959} or $A^*$ \cite{dechter_generalized_1985} algorithm.
Different ways to speed up the calculation have been researched \cite{van_dongen_efficiently_2018}\cite{song_efficient_2017}.
However, the alignment algorithm assumes the process to have a single case identifier and can therefore not be used for compositions of processes that operate on multiple objects.

Multiple extensions to the traditional alignment algorithm have been made over the years that consider additional dimensions together with the workflow dimension \cite{borrego_conformance_2014}\cite{burattin_conformance_2016}.
Thereby they use higher-order nets to represent the additional dimensions.
The data and resource-aware conformance checking approach from de Leoni et al. uses data Petri nets \cite{de_leoni_data-_2012}.
Felli et al. use data Petri nets together with satisfiability modulo theories to compute data-aware alignments \cite{DBLP:conf/bpm/FelliGMRW21}.
Sommers et al. constructed a $\nu$-Petri net to calculate resource-constrained alignments \cite{bernardinello_aligning_2022}.
But all of the approaches above assume the process to have a single-case identifier.

There are approaches that lift this generalization and model processes as interacting sub-processes.
Multi-agent process models describe the behavior of agents and their interaction by composing Petri nets \cite{nesterov_discovering_2022}.
Object-centric process mining \cite{van_der_aalst_object-centric_2019} extends the notion of processes so that they can interact and operate on objects from different types.
Adams et al. defined the notion of cases and variants for object-centric processes \cite{adams_defining_2022} as event graphs instead of event sequences.
The defined process executions serve as input for our alignment calculation.
The other data type that we use as input is an object-centric Petri net \cite{aalst_discovering_2020} that can describe allowed behavior.
Object-centric Petri nets can be discovered from object-centric event logs using the discovery algorithm from van der Aalst and Berti \cite{aalst_discovering_2020}.
Precision and fitness metrics to evaluate the quality of a model have, recently, been proposed~\cite{adams_precision_2021}.
However, techniques to check conformance to a de-jure model and spot deviations, such as object-centric alignments, are so far missing.

\section{Preliminaries}
\label{sec:Preliminaries}
Object-centric process mining deals with events that operate on a variety of objects of different types.
Events are activities that happen at a timestamp for a number of objects of different types.
$\mathbb{U}_{event}$ is the Universe of event identifiers.
The universe $\mathbb{U}_{act}$ contains all visible activities.
$\mathbb{U}_{typ}$ is the universe of all object types.
The universe of objects is $\mathbb{U}_{obj}$.
Each object has exactly one type associated with it
$\pi_{type} : \mathbb{U}_{obj} \to \mathbb{U}_{typ}$.
$\mathbb{U}_{time}$ is the universe of all timestamps.

\begin{definition}[Event Log]
$L = (E, O, OT, \pi_{act}, \pi_{obj}, \pi_{time}, \pi_{trace})$ is an event log with:

\begin{itemize}
    \item $E \subseteq \mathbb{U}_{event}$ is a set of events, $O \subseteq \mathbb{U}_{obj}$ is a set of objects,
    \item $OT = \{ \pi_{type}(o) | o \in O \}$ is a set of object types,
    \item $\pi_{act}: E \to \mathbb{U}_{act}$ maps each event to an activity,
    \item $\pi_{obj}: E \to \mathcal{P} (\mathbb{U}_{obj}) \setminus \{ \emptyset \}$ maps each event to at least one object,
    \item $\pi_{time}: E \to \mathbb{U}_{time}$ maps each event to a timestamp, and
    \item $\pi_{trace}: O \to E^*$ maps each object onto a sequence of events such that
    $\pi_{trace}(o) = \langle e_1, ..., e_n \rangle$ with\\
    $\{ e_1, ..., e_n \} = \{ e \in E| o \in \pi_{obj}(e)\}$ and $\forall_{i \in \{1, ..., n-1\}}\; \pi_{time}(e_i) \leq \pi_{time}(e_{i+1})$ 
\end{itemize}
\end{definition}

Event logs can contain events from multiple process executions. When analyzing the behavior we want to extract one process execution.

\begin{definition}[Process Execution]
Let $L = (E, O, OT, \pi_{act}, \pi_{obj}, \pi_{time}, \pi_{trace})$ be an object-centric event log.
The object graph $OG_{L} {=} (O,I)$ with $I {=} \{ \{o, o'\} |\allowbreak \exists_{e \in E} \{o, o' \} \subseteq \pi_{obj}(e) \wedge o \neq o'\}$ connects objects that share events.
The connected components $ con(L) = \{ X \subseteq O | X \text{ is a connected component in } OG_L \}$ of the object graph are sets of inter-dependent objects.
Each set $X \in con(L)$ defines a process execution of $L$.
A process execution is a graph $P_X = (E_X, D_X)$ with nodes $E_X = \{ e \in E | X \cap \pi_{obj}(e) \neq \emptyset \}$ and edges $D_X = \{ (e, e') \in E_X \times E_X | \exists_{o \in X, 1 \leq i < n} \langle e_1, ..., e_n \rangle = \pi_{trace}(o) \wedge e = e_i \wedge e' = e_{i+1} \}$.
The set $px(L) = \{ P_X | X \in con(L) \}$ contains all process executions of event log $L$.
\end{definition}

\autoref{fig:px} shows the process execution of the running example.
Object-centric Petri nets describe object-centric behavior by using types like a colored Petri net \cite{jensen_coloured_2009}.

$\mathcal{B}(A)$ is used to represent all multisets for a set $A$.
Given multiset $M$ for set $A$, the number of instances of element $a \in A$ in $M$ is $M(a)$.
We overload the notation $M = [a^k | a \in A]$ to state that there are $k$ instances of element $a$ in multiset $M$.

\begin{definition}[Object-centric Petri Net \cite{aalst_discovering_2020}]
    An object-centric Petri net is a tuple $ON = (N, pt, F_{var})$ where $N = (P, T, F, l)$ is a labeled Petri net with places P and transitions T.
    $F \in \mathcal{B}((P \times T) \cup (T \times P))$ is the multiset of arcs between places and transitions.
    Transitions are labeled with activities or $\tau$ by $l: T \to \mathbb{U}_{act} \cup \{\tau\}$ with invisible activity $\tau \not \in \mathbb{U}_{act}$.
    $pt: P \to \mathbb{U}_{typ}$ maps places to object types and $F_{var} \leq F$ is the sub-multiset of variable arcs.

    Note that we label all transitions to activities or $\tau$ with function $l$.
    Other common definitions for object-centric Petri nets define $l$ as a partial function.
    This can be translated into our definition by assuming $l(t) = \tau$ for all $t$ without a label.
    We define the following derived notations for object-centric Petri nets
    \begin{itemize}
        \item $\bullet t = \{ p \in P | (p, t) \in F\}$ is the preset of transition $t \in T$.
        \item $t \bullet = \{ p \in P | (t, p) \in F\}$ is the post set of transition $t \in T$.
        \item $pl(t) = \bullet t \cup t \bullet$ are the input and output places of $t \in T$, 
        $pl_{var}(t) = \{ p \in P | \{ (p,t), (t,p) \} \cap F_{var} \neq \emptyset \}$
        are places that are connected through variable arcs and
        $pl_{nv}(t) = \{ p \in P | \{ (p,t), (t,p) \} \cap (F \setminus F_{var}) \neq \emptyset \}$ are places that are connected through non-variable arcs.
        \item $tpl(t) = \{ pt(p) | p \in pl(t) \}$, $tpl_{var}(t) = \{ pt(p) | p \in pl_{var}(t) \}$, and
        $tpl_{nv}(t) = \{ pt(p) | p \in pl_{nv}(t) \}$ are object types related to transitions.
    \end{itemize}
\end{definition}

\autoref{fig:pn-w-flattened} shows the object-centric Petri net for the running example.
It has two object types and variable arcs for transitions \textit{receive sample order} and \textit{receive products order}.

\begin{definition}[Well-Formed Object-Centric Petri Net \cite{aalst_discovering_2020}]
    Let $ON = (N, pt, F_{var})$ be an object-centric Petri net with $N = (P, T, F, l)$.
    ON is well-formed if for each transition $t \in T: tpl_{var}(t) \cap tpl_{nv}(t) = \emptyset$. 
\end{definition}

In a well-formed object-centric Petri net arcs, connected to the same transition and places with the same object type, are either all variable or none of them is.
We assume for the following that all the object-centric Petri nets we use are well-formed.
Similar to colored Petri nets, object-centric Petri nets use the notion of markings and bindings to describe the semantics of a Petri net.

\begin{definition}[Marking of object-centric Petri Net \cite{aalst_discovering_2020}]
    Let $ON = (N, pt, \allowbreak F_{var})$ be an object-centric Petri net with $N = (P, T, F, l)$.
    $\mathcal{Q}_{ON} = \{ (p, o) \in P \times \mathbb{U}_{obj} | pt(p) = \pi_{type}(o) \}$ is the set of possible tokens.
    A marking M of ON is a multiset of tokens $M \in \mathcal{B}(\mathcal{Q}_{ON})$.
\end{definition}

A binding describes which transition fires and what object instances are consumed and produced per object type.

\begin{definition}[Binding of object-centric Petri Net \cite{aalst_discovering_2020}]
    Let $ON = (N, pt,\allowbreak F_{var})$ be an object-centric Petri net with $N = (P, T, F, l)$.
    The set of all possible bindings is 
    $B = \{ (t,b) \in T \times (\mathbb{U}_{type} \not \to \mathcal{P} (\mathbb{U}_{obj})) | dom(b) = tpl(t) \land \forall_{ot \in tpl_{nv}(t)} \forall_{p \in pl_{nv}(t), pt(p) = ot} |b(ot)| = F(p,t) \}$.
    A binding $(t, b) \in B$ corresponds to firing transition t in Petri net ON.
    The object map b describes what object instances are consumed and produced.
    The multiset of consumed tokens given binding $(t, b) \in B$ is
    $cons(t,b) = [(p, o) \in \mathcal{Q}_{ON} | p \in \bullet t \land o \in b(pt(p))]$.
    The multiset of produced tokens given binding $(t, b) \in B$ is
    $prod(t,b) = [(p, o) \in \mathcal{Q}_{ON} | p \in t \bullet \land o \in b(pt(p))]$.

    Binding $(t, b) \in B$ is enabled in marking $M \in \mathcal{B}(\mathcal{Q}_{ON})$ if $cons(t,b) \leq M$.
    Applying binding $(t,b)$ in marking $M$ leads to new marking $M' = M - cons(t,b) + prod(t,b)$.
    We use the notation $M \xrightarrow{\text{(t,b)}} M'$ for applying $(t,b)$ in $M$.
    This implies that $(t,b)$ was enabled in $M$ and $M'$ is the result of applying $(t,b)$ in $M$.

    This notation can be extended to a sequence of bindings
    $\sigma = \langle (t_1, b_1), (t_2, b_2), ..., \allowbreak (t_n, b_n) \rangle \in B^*$ such that
    $M_0 \xrightarrow{\text{$(t_1,b_1)$}} M_1 \xrightarrow{\text{$(t_2,b_2)$}} M_2 ... \xrightarrow{\text{$(t_n,b_n)$}} M_n$.
    We use the notation  $M \xrightarrow{\text{$\sigma$}} M'$ to show that $M'$ can be reached from $M$ by applying the bindings in $\sigma$ after another.
    The transitions can be mapped to activities using the label function $l$.
    This results in the visible binding sequence
    $\sigma_{\upsilon} = \langle (l(t_1),b_1), (l(t_2),b_2), ..., (l(t_n),b_n))$ where $(l(t_i),b_i)$ is omitted if $l(t_i) = \tau$.
\end{definition}

\begin{definition}[Accepting object-centric Petri Net \cite{aalst_discovering_2020}]
    An accepting object-centric Petri net is a tuple $AN = (ON, M_{init}, M_{final})$ where $ON = (N, pt, F_{var})$ is a well-formed object-centric Petri net.
    $M_{init} \in \mathcal{B}(\mathcal{Q}_{ON})$ and $M_{final} \in \mathcal{B}(\mathcal{Q}_{ON})$ indicate the initial and final markings of the net.
\end{definition}

Accepting object-centric Petri nets accept some binding sequences and some not.
The set of all binding sequences that are accepted form a language.

\begin{definition}[Language of an Accepting Petri Net \cite{aalst_discovering_2020}]
    The language $\phi(AN) = \{ \sigma_{\upsilon} | M_{init} \xrightarrow{\text{$\sigma$}} M_{final} \}$ of an accepting object-centric Petri net
    $AN = \allowbreak (ON,\allowbreak M_{init}, M_{final})$ contains all the visible binding sequences starting in $M_{init}$ and ending in $M_{final}$.
\end{definition}

\section{Alignment}
\label{sec:alignment}
Alignments show how process executions and the allowed behavior of a de-jure model relate to each other.
We use moves to represent whether something occurs in the process execution, the de-jure model, or both of them.
Note that we do not allow alignments to alter the set of objects. We assume them to be fixed.

\begin{definition}[Moves]
    Let $L=(E, O, OT, \pi_{act}, \pi_{obj}, \pi_{times}, \pi_{trace})$ be an object-centric event log and $P_X = (E_X, D_X) \in px(L)$ a process execution.
    Let $AN = (((P, T, F, l), pt, F_{var}), M_{init}, M_{final})$ be an accepting object-centric Petri net.
    The set of all moves is $moves(P_X, AN) \subseteq (\{\pi_{act}(e) | e \in E_X \} \cup \{\gg\}) \times \mathcal{P}(O) \times (T \cup \{ \gg\}) \times \mathcal{P}(O)$ with skip symbol $\gg \not \in \mathbb{U}_{act} \cup T$. A move $(a_{log}, o_{log}, t_{mod}, o_{mod}) \in moves(P_X, AN)$ is one of the following three types:\\
    Log move - for an $e\in E_X$: $a_{log} = \pi_{act}(e)$, $ o_{log} = \pi_{obj}(e)$, $t_{mod} = \gg$, and $o_{mod} = \emptyset$.\\
    Model move - for a $(t,b) \in \sigma \text{ with } \sigma_{\upsilon} \in \phi(AN)$: $t_{mod} = t$, $o_{mod} = \bigcup_{o \in range(b)} o$, $a_{log} = \gg$, and $o_{log} = \emptyset$.\\
    Synchronous move - for an $e\in E_X$ and a $(t,b) \in \sigma \text{ with } \sigma_{\upsilon} \in \phi(AN)$: $a_{log} = \pi_{act}(e) = l(t)$, $ t_{mod} = t$, and $o_{log} = o_{mod} = \pi_{obj}(e) = \bigcup_{o \in range(b)} o$.
\end{definition}

Each type of move is presented in \autoref{fig:allowed-moves}.
For synchronous moves, the activity and objects of the model and log part have to be exactly the same.
For log and model moves, only one part has an activity and objects while the other parts are skipped.
The skip symbol $\gg$ represents that nothing happened in that part.
The upper part is the log part $a_{log}$ and $o_{log}$.
The lower block is the model part that contains $t_{mod}$, the activity $l(t_{mod})$ it is labeled with, and $o_{mod}$.

\begin{figure}[h]
    \centering
    \includegraphics[width=0.7\textwidth]{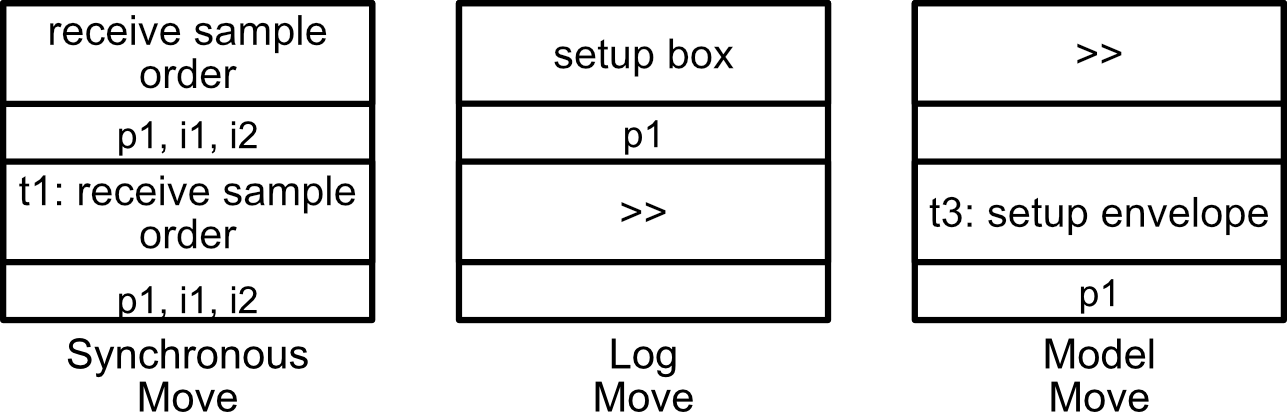}
    \caption{One synchronous move, one log move, and one model move for the running example with the accepting object-centric Petri net in \autoref{fig:pn-w-flattened} and the process execution in \autoref{fig:px}.}
    \label{fig:allowed-moves}
\end{figure}

We define the following projections on moves.

\begin{definition}[Move Projections]\\
    Given a move $m = (a_{log}, o_{log}, t_{mod}, o_{mod}) \in moves(P_X, AN)$ with process execution $P_X$ and accepting object-centric Petri Net $AN = (((P, T, F, l), pt, F_{var}),\allowbreak M_{init}, \allowbreak M_{final})$.
    We use the following projections to map moves to their attributes:\\
    $\pi_{la}(m) = a_{log}$ maps moves to their log activity.\\
    $\pi_{lo}(m) = o_{log}$ maps moves to their log objects.\\
    $\pi_{mt}(m) = t_{mod}$ maps moves to their model transition.\\
    $\pi_{ma}(m) = l(t_{mod})$ maps moves to the activity the transition is labeled with.\\
    $\pi_{mo}(m) = o_{mod}$ maps moves to their model objects.\\

\end{definition}

When reasoning about the model or log behavior individually, we want to ignore skipped behavior in that part.
An alignment, which we define in \autoref{def:alignment}, is a directed acyclic graph of moves.
We need to reason about the model and log behavior individually to define alignments.
Therefore, we introduce the following reductions that remove moves with skipped behavior in a given part from a directed acyclic graph of moves while maintaining the partial order defined by the acyclic graph.

\begin{definition}[Reduction to Log and Model Part]
    Let  $MG = (M, C)$ be a directed acyclic graph with vertices $M \subseteq moves(P_X, AN)$ and edges $C \subseteq M \times M$ with process execution $P_X$ and accepting object-centric Petri Net $AN$.
    The reduction to moves with visible activity in the log part is $MG_{\downarrow log} = (M_{\downarrow log}, C_{\downarrow log})$ with:
    \begin{itemize}
        \item $M_{\downarrow log} = \{ m \in M | \pi_{la}(m) \neq \gg\}$ synchronous and log moves.
        \item $C_{\downarrow log} = \{ (m_1, m_n) \in M_{\downarrow log} \times M_{\downarrow log} | \exists_{<m_1, ..., m_n> \in M^*}\; \forall_{1 \leq i < n}\; (m_i, m_{i+1}) \in C \wedge \forall_{1 < i < n}\;\allowbreak \pi_{la}(m_i) \allowbreak = \gg \}$ edges between synchronous and log moves and new edges where model moves were removed.
    \end{itemize}

    The reduction to moves with visible activity in the model part is $MG_{\downarrow mod} = (M_{\downarrow mod}, C_{\downarrow mod})$ with:
    \begin{itemize}
        \item $M_{\downarrow mod} = \{ m \in M | \pi_{ma}(m) \neq \gg\}$ synchronous and model moves
        \item $C_{\downarrow mod} = \{ (m_1, m_n) \in M_{\downarrow mod} \times M_{\downarrow mod} | \exists_{<m_1, ..., m_n> \in M^*}\; \forall_{1 \leq i < n}\; (m_i, m_{i+1}) \in C \wedge \forall_{1 < i < n}\;\allowbreak \pi_{ma}(m_i) \allowbreak = \gg \}$ edges between synchronous and model moves and new edges where log moves were removed.
    \end{itemize}
\end{definition}

In \autoref{fig:reductions} both $MG_{\downarrow log}$ and $MG_{\downarrow model}$ are visualized for an arbitrary directed acyclic graph of moves.
$MG_{\downarrow log}$ describes a directed acyclic graph after removing all model moves and related edges.
New edges are added when two movements used to be connected via removed model moves in the movement graph.
$MG_{\downarrow model}$ behaves simultaneously for the model part.

\begin{figure}[h]
    \centering
    \includegraphics[width=0.9\textwidth]{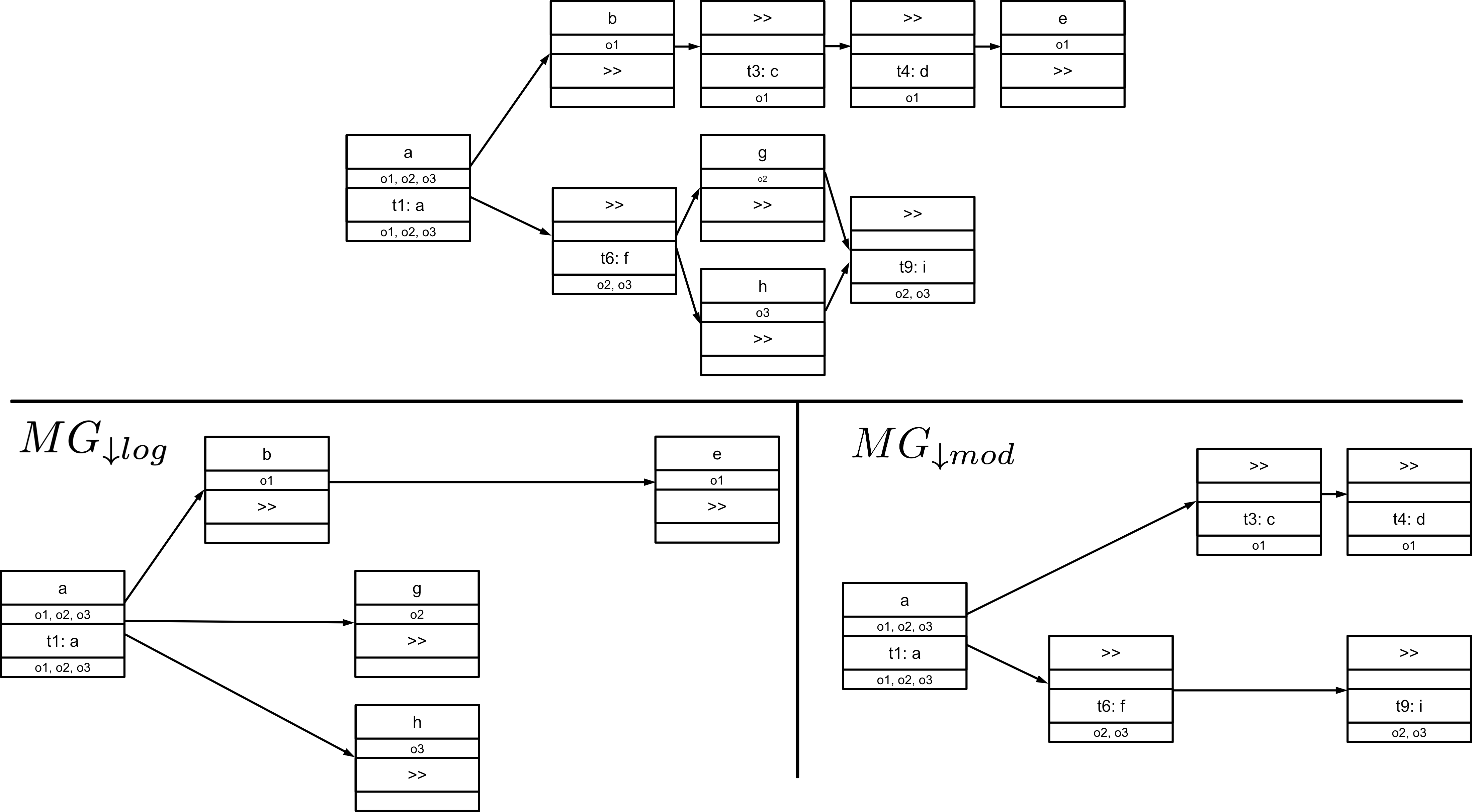}
    \caption{Reductions $MG_{\downarrow log}$ and $MG_{\downarrow model}$ for a directed acyclic graph of moves.}
    \label{fig:reductions}
\end{figure}

An alignment is a directed acyclic graph of moves that requires the log part to contain the process execution and the model part to be in the language of the de-jure model.

\begin{definition}[Alignment]
    \label{def:alignment}
    Let $L=(E, O, OT, \pi_{act}, \pi_{obj}, \pi_{times}, \pi_{trace})$ be an object-centric event log and $P_X = (E_X, D_X) \in px(L)$ a process execution.
    Let $AN = (((P, T, F, l), pt, F_{var}) M_{init}, M_{final})$ be an accepting object-centric Petri net.
    An alignment $AL_{P_X, AN} = (M, C)$ is a directed acyclic graph on $M \subseteq moves(P_X, AN)$ such that:

    The alignment contains the process execution behavior in the log parts:
    $P_X$  is isomorphic to $AL_{{P_{X}, AN}_{\downarrow log}}$ with bijective function $f: E_X \to M_{\downarrow log}$ such that $\forall_{e \in E_X} \pi_{act}(e) = \pi_{la}(f(e)) \land \pi_{obj}(e) = \pi_{lo}(f(e))$.

    The alignment contains behavior that is accepted by the Petri net in the model parts: 
    There exists a binding sequence $\sigma = \langle (t_1, b_1), (t_2, b_2), ..., (t_n, b_n) \rangle \in B^*$ with $\sigma_{\upsilon} \in \phi (AN)$ and a bijective function $f': B \to M_{\downarrow mod}$ such that:
        \begin{itemize}
            \item $\forall_{(t,b) \in \sigma} t = \pi_{mt}(f'(t,b)) \wedge \bigcup_{o \in range(b)} o = \pi_{mo}(f'(t,b))$
            \item $\forall_{m_1, m_2 \in M_{\downarrow mod}} (m_1, m_2) \in C_{\downarrow mod} \Rightarrow \exists_{1 \leq i < j \leq n} m_1 = f'(t_i, b_i) \land m_2 = f'(t_j, b_j)$
        \end{itemize}

    There can be multiple alignments for a process execution and an accepting object-centric Petri net.
    $al(P_X, AN)$ is the set of all these alignments.
\end{definition}

An alignment for the running example can be seen in \autoref{fig:alignment}.
The graph of moves is directed and acyclic which creates a partial order of moves.
When reducing the graph to log and synchronous moves, it is isomorphic to the given process execution in \autoref{fig:px}.
This ensures that the behavior of the given process execution is contained in the alignment.
The reduction to the model part relates to a binding sequence that has to be in the language of the accepting object-centric Petri net.
This ensures that the model part describes behavior that is accepted by the model.
Note that an alignment does not put any other requirement on the model part than to be allowed by the model.
Therefore, alignments can also contain model behavior that is very different from the process execution.
Those alignments will end up with more model and log moves and fewer synchronous moves.
Cost functions for moves allow us to search for special behavior from the model that we want to align.
Most of the time we are looking for the allowed behavior that is the most similar to the given process execution.
That is what the standard cost function formalizes.

\begin{definition}[Standard Cost of Move Function]\\
Let $AL_{P_X, AN} = (M, C) \in al(P_X, AN)$ be an alignment with process execution $P_X$ and accepting object-centric Petri Net $AN$.
The cost function $cost_{move}: moves(P_X, AN) \to \mathbb{R}$ is defined as:\\
$cost_{move}(m) =
\begin{cases}
0 \; \text{if m is a synchronous move,}\\ 
|\pi_{lo}(m) \cup \pi_{mo}(m)| \; \text{if m is a model or log move} \wedge \pi_{ma}(m) \neq \tau,\\
\varepsilon \; \text{if } \pi_{ma}(m) = \tau \land a_{log} = \gg ,\\
+\infty \; \text{else}
\end{cases}
$

With $\varepsilon$ being a positive very small number.
The cost of a complete alignment is the sum over all alignment moves: $cost_{alignment}(AL_{P_X, AN}) = \sum_{m \in M} cost_{move}(m)$.
    
\end{definition}

The cost of the alignment in \autoref{fig:alignment} is 6 because there are 3 model and 3 log moves that have one object each.
The lower the cost the better model and log part match, because synchronous moves are cheaper than model and log moves.
We call one of the cheapest alignments an optimal alignment.

\begin{definition}[Optimal Alignment]
    Let $L$ be an object-centric event log and $P_X \in px(L)$ be a process execution.
    Let $AN$ be an accepting object-centric Petri net.
    An alignment $AL_{P_X, AN} = (M, C) \in al(P_X, AN)$ is optimal if\\ $\forall_{a \in al(P_X, AN)} \allowbreak cost_{alignment}(AL_{P_X, AN}) \leq cost_{alignment}(a)$.
\end{definition}

Note that there can still be multiple optimal alignments for a given process execution and a de-jure Petri net as long as they are all equally similar to the process execution.

\begin{figure}[h]
    \centering
    \includegraphics[width=0.7\textwidth]{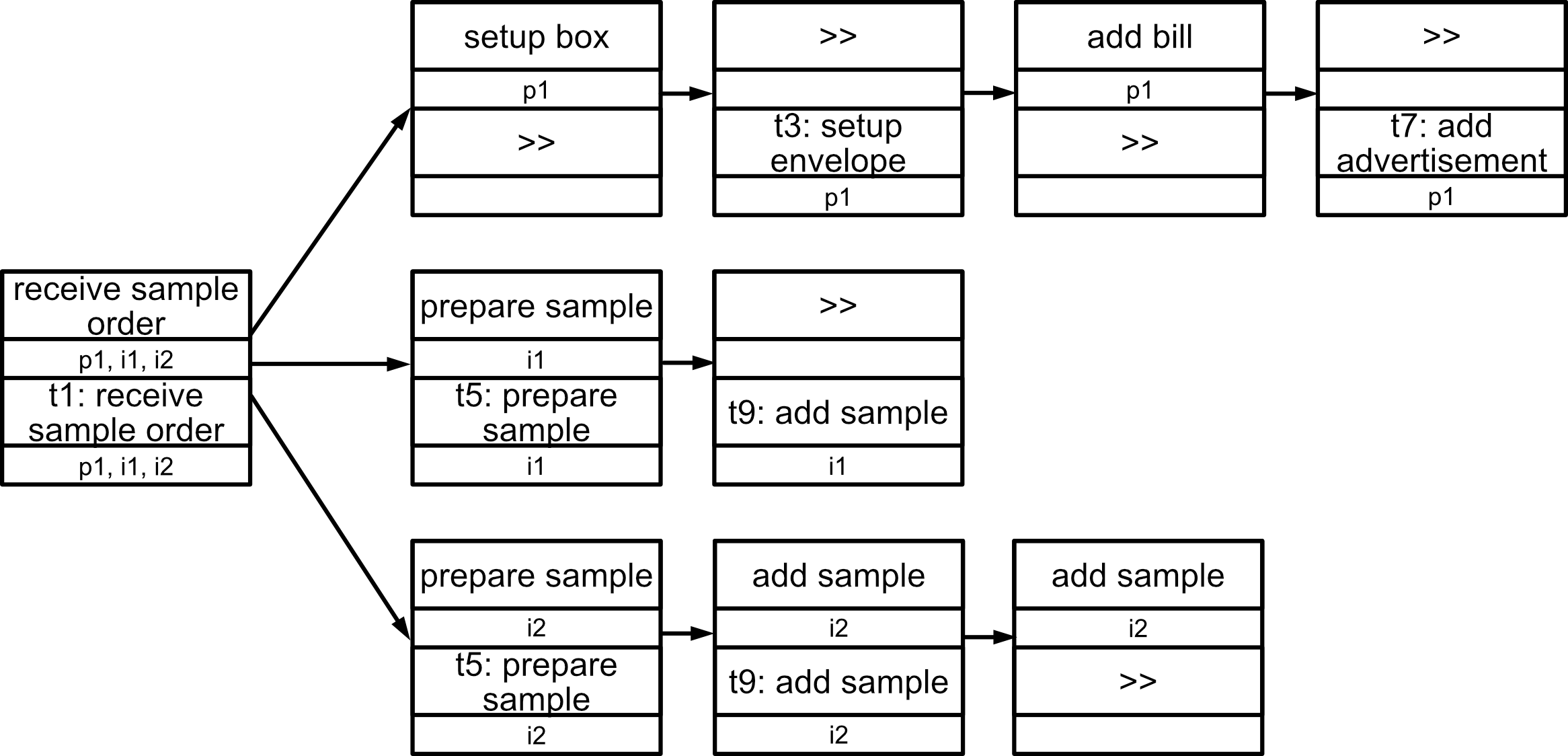}
    \caption{Alignment for process execution in \autoref{fig:px} and accepting object-centric Petri net in \autoref{fig:pn-w-flattened}.}
    \label{fig:alignment}
\end{figure}

An optimal object-centric alignment for the running example can be seen in \autoref{fig:alignment}.
The inter-object dependencies that are defined in the object-centric Petri net in \autoref{fig:pn-w-flattened} are respected by the object-centric alignment.
For example, the object-centric alignment agreed on one shared start activity for  \textit{p1}, \textit{i1}, and\textit{i2} which keeps the alignment consistent with inter-object dependencies.
This differentiates object-centric alignments and traditional alignments which can violate this requirement.

\section{Object-Centric Synchronous Product Net}
\label{sec:sny-prod-net}
The first part of our approach to calculate optimal alignments is to create a synchronous product net for a given process execution and an accepting object-centric Petri net.
That synchronous product net is designed to generate all possible alignments.
Since alignments can have model, log, and synchronous moves, the synchronous product net consists of three parts that directly relate to them.
In the synchronous product net in \autoref{fig:syn-pn} these parts are marked.
First, we construct the log part from the process execution.
Then, we pre-process the de-jure model to finally merge them together to the synchronous product net and add the synchronous part.

\subsection{Process Execution Net Construction}
The process execution net will be the part of the synchronous product net that guarantees that the process execution is contained in the alignment.
The construction of the process execution net relates to Petri net runs \cite{desel_placetransition_1998} and causal nets \cite{aalst_causal_2011}.
For each object, each edge in the process execution defines a precondition for an event.
The process execution net in \autoref{fig:px-net}, therefore, has a place for each condition defined in the process execution.

\begin{figure}[h]
    \centering
    \includegraphics[width=0.9\textwidth]{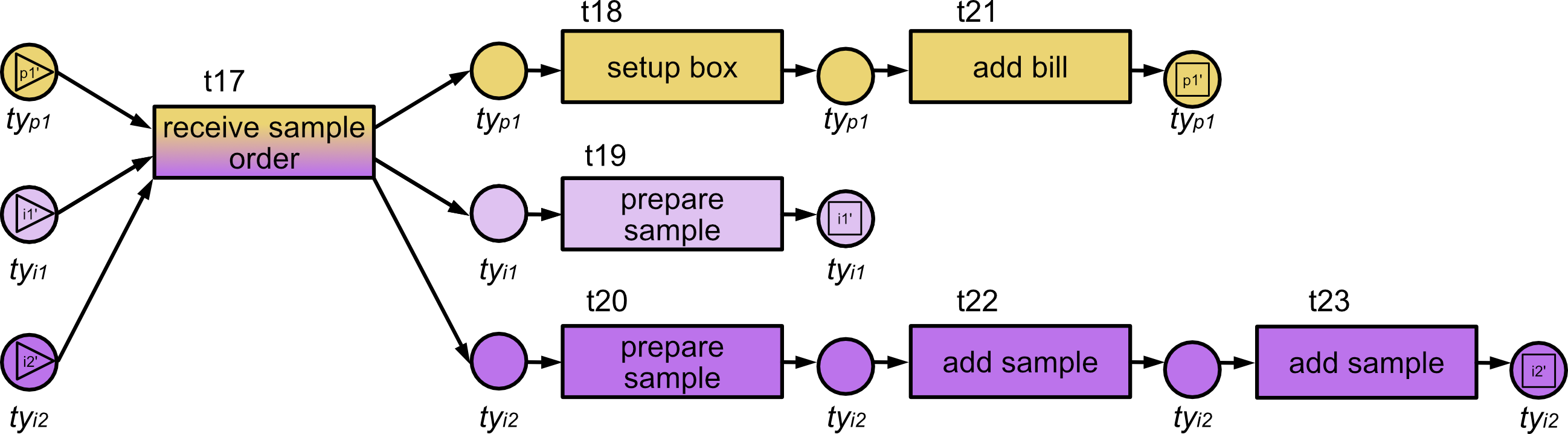}
    \caption{Process Execution Net}
    \label{fig:px-net}
\end{figure}

Because object-centric Petri nets are under-specified and we will merge the created net with a given accepting object-centric Petri net, we can not use the same objects and types for both, the log and model part.
The process execution contains \textit{directly follows} relations on the level of object instances.
For example, the running example process execution in \autoref{fig:px} shows that on item \textit{i1} \textit{add sample} never happens, whereas on item \textit{i2} activity \textit{add sample} happens twice.
Although they both have the same object type, the process execution differentiates them.
An accepting object-centric Petri net is under-specified in that regard because it only differentiates places by type.
Clearly, it is important to separate the object instances because otherwise, deviations in one object instance could compensate for a contrary deviation in another.
For example, the missing \textit{add sample} activity on \textit{i1} could be compensated by the additional \textit{add sample} activity on \textit{i2} if one would not strictly separate which event happened on which object instance.
To enable an accepting object-centric Petri net to separate object instances, we have to create a new individual type for each object instance in the process execution.
Since each object instance must have exactly one type, we then need to replace the original object instances with new individual placeholder object instances in the process execution net.

\begin{definition}[Generation of new Objects and Types]
    Let $L=(E, O, OT, \allowbreak \pi_{act}, \pi_{obj}, \pi_{times}, \pi_{trace})$ be an object-centric event log and $P_X = (E_X, D_X) \in px(L)$ a process execution containing objects $X$.
    Let $O_{new}(P_X) \in \mathbb{U}_{obj}$ with $O_{new}(P_X) \cap O = \emptyset$ and $|O_{new}(P_X)| = |X|$ be a set of new object instances 
    and $OT_{new}(P_X) \in \mathbb{U}_{type}$ with $OT_{new}(P_X) \cap \{ \pi_{type}(o) | o \in X \} = \emptyset$ and $|OT_{new}(P_X)| = |X|$ be a set of new types.
    
    Let $new_{obj}: X \to O_{new}(P_X)$ be a bijective function that renames objects from $X$ to unused objects in $O_{new}(P_X)$.
    Let $new_{type}: X \to OT_{new}(P_X)$ be a bijective function that relates each object to its unique new type.

    We define the following derivative concepts:\\
    $orob: O_{new}(P_X) \to X$ with $orob = new_{obj}^{-1}$ returns the original object.\\
    $orty: OT_{new}(P_X) \to \{ \pi_{type}(o) | o \in X \}$ with $orty(ot)  = \pi_{type}(new_{type}^{-1}(ot))$ returns the original type of the object that is associated with the given new type.
\end{definition}

We create the process execution net with the new objects and types.
The conditions defined in the process execution are the \textit{directly follows} relation per object.
For each condition, the Petri net contains a place.
Also, start and end places are added for each object.
The transitions relate to the events of the process execution.

\begin{definition}[Process Execution Net]
    Let $L = (E, O, OT, \pi_{act}, \pi_{obj}, \pi_{time},\allowbreak \pi_{trace})$ be an object-centric event log.
    Let $P_X = (E_X, D_X) \in px(L)$ be a process execution from the event log.
    The process execution net $PX_{net} = (((P, T, F, l), pt,\allowbreak F_{var}), M_{init},\allowbreak M_{final})$ is an accepting object-centric Petri net with:

    \begin{itemize}
        \item $P = \{ p_{o;i} | o \in X \wedge \pi_{trace}(o) = \langle e_1, ..., e_n \rangle \wedge 1 \leq i \leq n-1 \}\\ 
        \cup \{ p_{o;s} | o \in X \} \cup \{ p_{o;e} | o \in X \}$
        \item $T = \{ t_e | e \in E_X \}$
        \item $F = [(t_e, p_{o;i}) \in T \times P | \pi_{trace}(o) = \langle e_1, ..., e_n \rangle \wedge \exists_{1 \leq i \leq n-1} e_i = e ]\\
        \cup [ (p_{o;i}, t_e) \in P \times T | \pi_{trace}(o) = \langle e_1, ..., e_n \rangle \wedge \exists_{1 \leq i \leq n-1} e_{i+1} = e ]\\
        \cup [ (p_{o;s}, t_e) \in P \times T | \pi_{trace}(o) = \langle e_1, ..., e_n \rangle \wedge e_1 = e ]\\
        \cup [ (t_e, p_{o;e}) \in T \times P | \pi_{trace}(o) = \langle e_1, ..., e_n \rangle \wedge e_n = e ]$
        \item $l(t_e) = \pi_{act}(e)$; $pt(p_{o;i}) = new_{type}(o)$; $F_{var} = \emptyset$
        \item $M_{init} = \{(p_{o;s}, new_{obj}(o)) | o \in X \}$; $M_{final} = \{ (p_{o;e}, o') | new_{obj}(o) \in X \}$
    \end{itemize}
\end{definition}

\subsection{Pre-processing of the Object-Centric Petri Net}

The de-jure behavior is already given as an accepting object-centric Petri net, which is very close to what we need to create the synchronous product net.
But an accepting object-centric Petri net can have variable arcs.
The Petri net of the running example in \autoref{fig:pn-w-flattened} has variable arcs for $t_1$ and $t_2$.
For transitions with variable arcs, we do not know beforehand how many objects they will use.
This becomes a problem when we try to find synchronous moves between the process execution net and the de-jure net.
A transition can only be synchronous if they use the same object instances which implies that they use the same number of object instances.
Therefore we need to know beforehand how many object instances a transition will use.

As mentioned in \autoref{sec:alignment}, we assume that the set of objects for the alignment is immutable and defined by the process execution.
For a predefined set of object instances, the number of objects a variable arc can use is finite.
Therefore, we can replace transitions with variable arcs with a set of transitions without variable arcs.
For each combination of how many objects a variable arc could consume we can add a new transition to the Petri net that uses exactly that number of objects, but by replacing the variable arc with a number of non-variable arcs.
When doing this for the Petri net in \autoref{fig:pn-w-flattened} the result will be the Petri net without variable arcs in \autoref{fig:pre-pn}.

\begin{definition}[Pre-processing of Accepting Object Centric Petri Net]
    Let $L=(E, O, OT, \pi_{act}, \pi_{obj}, \pi_{times}, \pi_{trace})$ be an object-centric event log and $P_X = (E_X, D_X) \in px(L)$ a process execution.
    Let $AN = (((P, T, F, l), pt, F_{var}),\allowbreak M_{init}, M_{final})$ be an accepting object-centric Petri net.
    Let there be an arbitrary but fixed order of types $ot_1, ..., ot_n \in OT$.
    The pre-processed accepting object-centric Petri net is $DJ_{net} = (((P', T', F', l'), pt', F'_{var}) M'_{init}, M'_{final})$ with:

    \begin{itemize}
        \item $P' = P$, $pt' = pt, F'_{var} = \emptyset$,
        \item $T' = \{t_{c_1, ..., c_n} | t \in T \wedge \forall_{1\leq i \leq n} 0 \leq c_{i} \leq |\{ o \in X | \pi_{type}(o)=ot_i \}| \text{ if }ot_i \in tpl_{var}(t) \text{ otherwise }
        c_i = 0\}$
        \item $F' = [(p,t_{c_1,...,c_n}) \in P' \times T' | (p,t) \in F \wedge (p,t) \not \in F_{var})]$\\
        $\cup [(t_{c_1,...,c_n}, p) \in T' \times P' | (t,p) \in F \wedge (t,p) \not \in F_{var})]$\\
        $\cup [(p,t_{c_1,...,c_n})^k \in P' \times T' | (p,t) \in F_{var} \wedge \exists_{1 \leq i \leq n} pt(p) = ot_i \wedge c_i = k]$\\
        $\cup[(t_{c_1,...,c_n}, p)^k \in T' \times P' | (t,p) \in F_{var} \wedge \exists_{1 \leq i \leq n} pt(p) = ot_i \wedge c_i = k]$
        \item $l'(t_{c_1,...,c_n}) = l(t)$
        \item $M'_{init} = [ (p,o) \in P' \times X| \forall_{t \in T} (t,p) \not \in F  \wedge pt(p) = \pi_{type}(o)]$
        \item $M'_{final} = [ (p,o) \in P' \times X| \forall_{t \in T} (p,t) \not \in F  \wedge pt(p) = \pi_{type}(o)]$
    \end{itemize}
    
\end{definition}

We call the pre-processed accepting object-centric Petri net a de-jure net.
The de-jure net of the running example can be seen in \autoref{fig:pre-pn}.

\begin{figure}[h]
    \centering
    \includegraphics[width=0.7\textwidth]{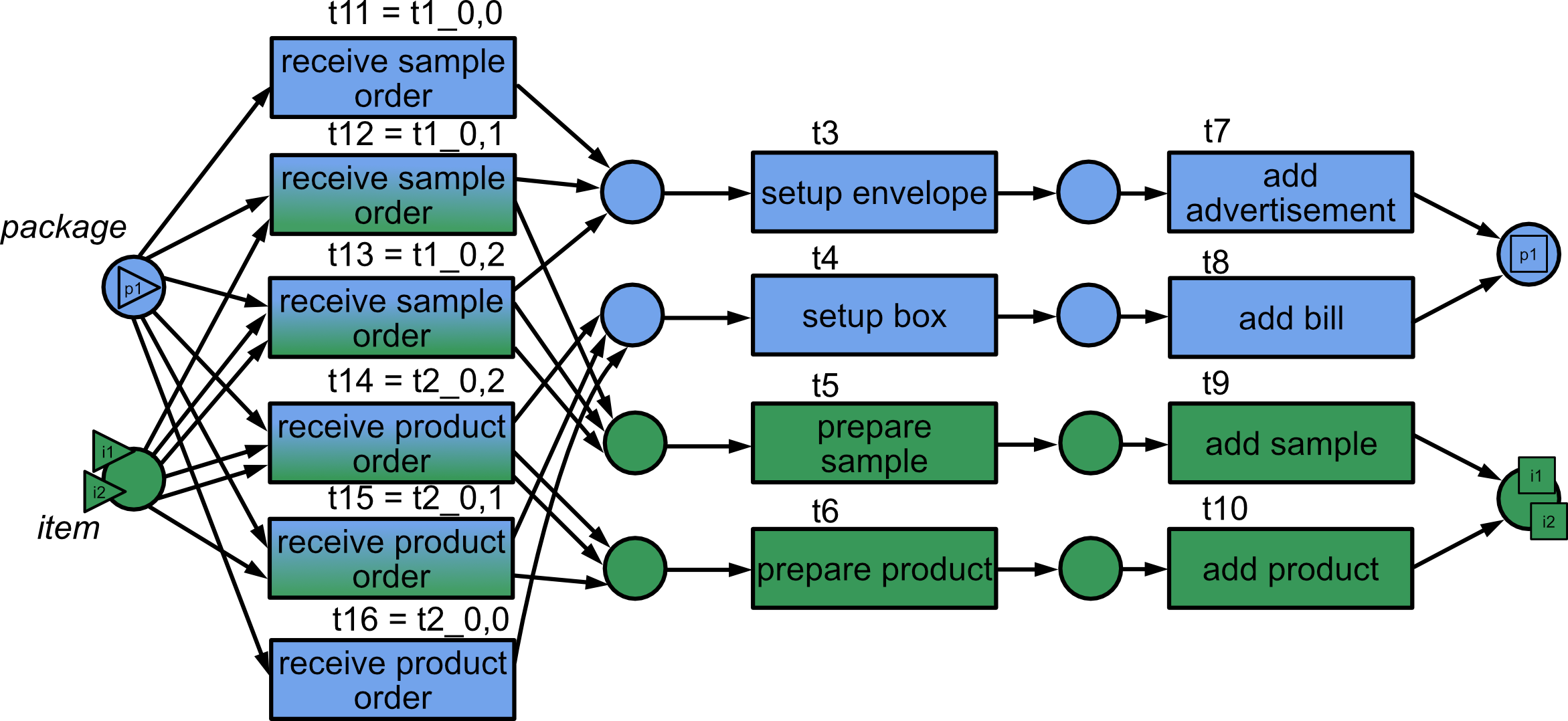}
    \caption{Preprocessed accepting object-centric Petri net without variable arcs}
    \label{fig:pre-pn}
\end{figure}

\subsection{Creating the Synchronous Product Net}

Now that we have two accepting object-centric Petri nets without variable arcs, one describing the process execution, and one describing the de-jure behavior, we can add them together to create the synchronous product net.
The synchronous product net should directly relate to possible alignments.
In an alignment, either the process execution and the de-jure parts advance forward separately or they advance forward synchronously.
For activities to happen in the process execution and the de-jure part simultaneously, they have to be the same activity on the same object instances.
Object-centric Petri nets can only separate between object types and not between object instances.
Therefore, we extend the object-centric Petri net with $\nu$-net requirements that can differentiate between object instances.
We decode log, model, and synchronous transitions with $(t_{log}, \gg), (\gg, t_{mod}),$ and $(t_{log}, t_{mod})$ as one can see in the synchronous product net in \autoref{fig:syn-pn}.

\begin{figure}[h]
    \centering
    \includegraphics[width=0.85\textwidth]{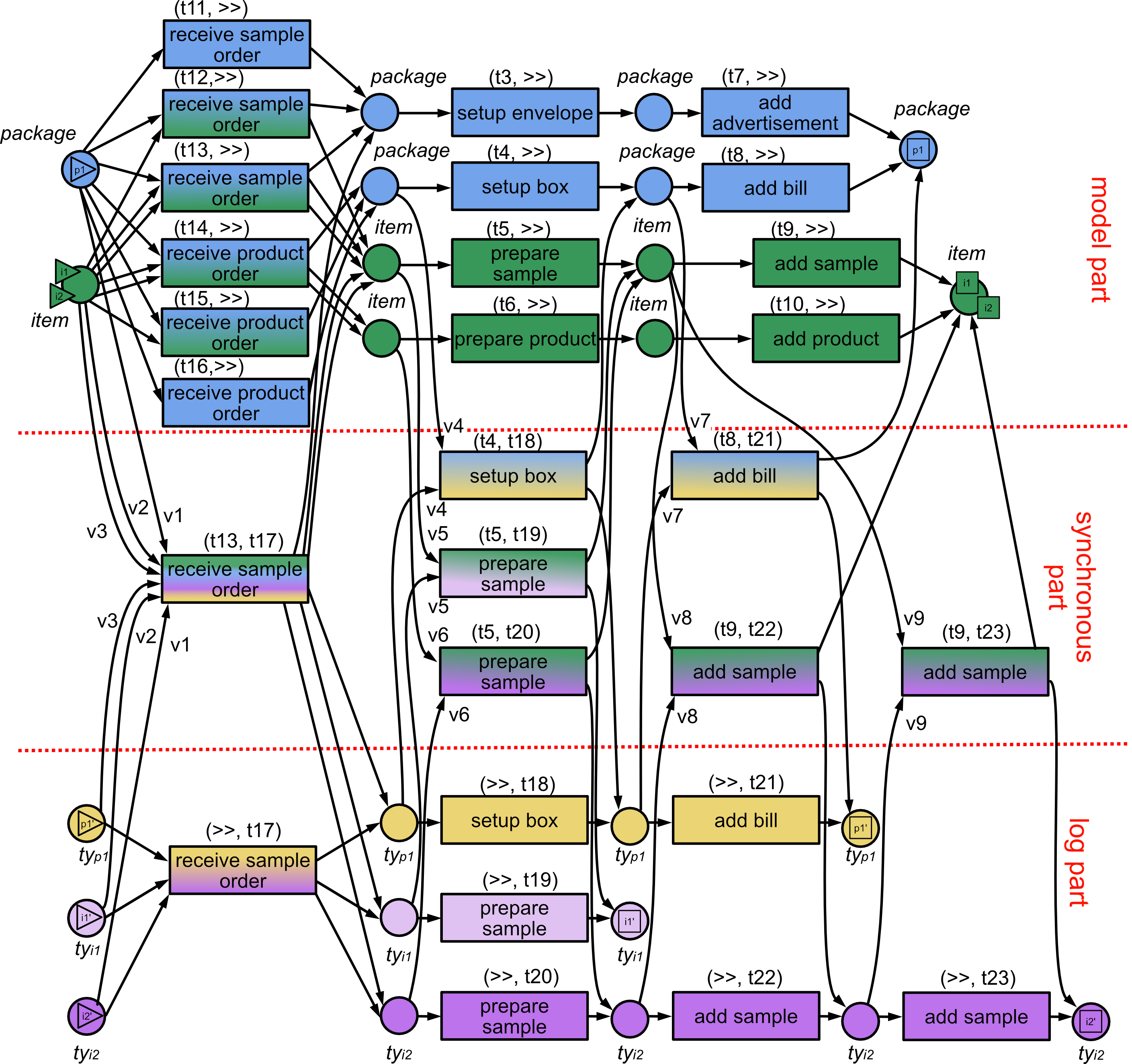}
    \caption{Synchronous product net for the process execution net in \autoref{fig:px-net} and accepting object-centric Petri net in \autoref{fig:pre-pn}}
    \label{fig:syn-pn}
\end{figure}

\begin{definition}[Synchronous Product Net]\\
    Let $PX_{net} = (((P_{PX}, T_{PX}, F_{PX}, l_{PX}), pt_{PX}, F_{var_{PX}} = \emptyset), M_{init_{PX}}, M_{final_{PX}})$ be a process execution net.
    Let $DJ_{net} = (((P_{DJ}, T_{DJ}, F_{DJ}, l_{DJ}), pt_{DJ}, F_{var_{DJ}} = \emptyset), M_{init_{DJ}}, M_{final_{DJ}}$ be a de-jure net.
    Let $Var$ be a set of unused variable names.

    Two transitions can be synchronous if they are labeled with the same activity and consume the same multiset of tokens. This requires them to be related to the same types and consume and produce the same number of tokens.\\
    $T_{syn} = \{(t_{PX}, t_{DJ}) \in T_{PX} \times T_{DJ} | 
    l_{PX}(t_{PX}) = l(_{DJ}(t_{DJ}) \wedge 
    \{orty(ty) | ty \in tpl(t_{PX}) \} = tpl(t_{DJ}) \wedge 
    \forall_{p_{PX} \in \bullet t_{PX}, p_{DJ} \in \bullet t_{DJ}} (orty(pt_{PX}(p_{PX})) = pt_{DJ}(p_{DJ})) \Rightarrow F_{PX}((p_{PX}, t_{PX})) {=} F_{DJ}((p_{DJ}, t_{DJ}))) \wedge 
    \forall_{p_{PX} \in t_{PX} \bullet, p_{DJ} \in t_{DJ} \bullet} (orty(pt_{PX}(p_{PX})) \allowbreak = pt_{DJ}(p_{DJ})) \Rightarrow 
    (F_{PX}((t_{PX}, p_{PX})) {=} F_{DJ}((t_{DJ}, p_{DJ})))
    \}$ is the set of synchronous transitions.
    
    The synchronous product net is the accepting object-centric $\nu$-Petri net $SP_{net} = (ON_{SP}, M_{init_{SP}}, \allowbreak M_{final_{SP}}, \allowbreak \nu)$ with $ON_{SP} = (N_{SP}, pt_{SP}, F_{var_{SP}} = \emptyset)$ with $N_{SP} = (P_{SP}, T_{SP}, F_{SP}, l_{SP})$ such that:

    \begin{itemize}
        \item $P_{SP} = P_{PX} \cup P_{DJ}$ is the union of places and $pt_{SP} = pt_{PX} \cup pt_{DJ}$
        \item $T_{SP} \subset (T_{PX}^{\gg} \times T_{DJ}^{\gg})= \{ (t_{PX}, \gg) | t_{PX} \in T_{PX}\} \cup \{ (\gg, t_{DJ}) | t_{DJ} \in T_{DJ}\} \cup T_{syn}$ is the union of transitions with additional synchronous transitions.
        \item $F_{SP} = [ (p_{PX}, (t_{PX}, \gg)) | (p_{PX}, t_{PX}) \in F_{PX} ]
        \cup [ (p_{DJ}, (\gg, t_{DJ})) | (p_{DJ}, t_{DJ}) \in F_{DJ} ]
        \cup [ ((t_{PX}, \gg), p_{PX}) | (t_{PX}, p_{PX}) \in F_{PX} ]
        \cup [ ((\gg, t_{DJ}), p_{DJ}) | (t_{DJ}, p_{DJ}) \in F_{DJ} ] 
        \cup [ (p_{PX}, (t_{PX}, t_{DJ})) | (p_{PX}, t_{PX}) \in F_{PX} \wedge (t_{PX}, t_{DJ}) \in T_{SP} ]\\
        \cup [ (p_{DJ}, (t_{PX}, t_{DJ})) | (p_{DJ}, t_{DJ}) \in F_{DJ} \wedge (t_{PX}, t_{DJ}) \in T_{SP} ] \\
        \cup [ ((t_{PX}, t_{DJ}), p_{PX}) | (t_{PX}, p_{PX}) \in F_{PX} \wedge (t_{PX}, t_{DJ}) \in T_{SP} ] \\
        \cup [ ((t_{PX}, t_{DJ}), p_{DJ}) | (t_{DJ}, p_{DJ}) \in F_{DJ} \wedge (t_{PX}, t_{DJ}) \in T_{SP} ] $
        \item $l_{SP} = \{ ((t_{PX}, \gg), a) | (t_{PX}, a) \in l_{PX} \}
        \cup \{ (\gg, t_{DJ}), a) | (t_{DJ}, a) \in l_{DJ} \}
        \cup \{ ((t_{PX}, t_{DJ}), a) | (t_{PX}, a) \in l_{PX},  (t_{DJ}, a) \in l_{DJ} \}$
        \item $M_{init_{SP}} = M_{init_{PX}} \cup M_{init_{DJ}}$ and $M_{final_{SP}} = M_{final_{PX}} \cup M_{final_{DJ}}$
        \item $\nu: P_{SP} \times T_{SP} \not \to Var$ such that:\\
        $\forall_{(t_{PX}, t_{DJ}) \in T_{PX} \times T_{DJ}} \forall_{p_{PX} \in \bullet(t_{PX}, t_{DJ})} \exists_{(p_{DJ}, (t_{PX}, t_{DJ})) \in F_{SP}}\\
        (p_{DJ} \in P_{DJ} \wedge  pt(p_{DJ}) = orty(pt(p_{PX})))
        \Rightarrow\\
        (\nu ((p_{DJ}, (t_PX, t_{DJ}))) = \nu ((p_{PX}, (t_{PX}, t_{DJ}))))$
    \end{itemize}
\end{definition}

In the process execution part, we use the $orob$ function when comparing objects from the model and log part.
The $\nu$ net requirement function assigns variables to the in-going arcs of the synchronous transition.
For each in-going arc from the process execution net, there has to be one arc from every place that has the same original type, so that those arcs have the same unique variable assigned by $Var$.
This requires the synchronous move to use the same object instance in the process execution net and the de-jure net.
The variables ensure that transitions can only consume the same object instance for arcs with the same variable.
This refers to the original objects not the newly created ones in the process execution net.

\begin{definition}[Valid Binding in Synchronous Product Net]
    Let $SP_{net} = (((P_{SP}, T_{SP}, F_{SP}, l_{SP}), pt_{SP}, F_{var_{SP}}), M_{init_{SP}}, M_{final_{SP}}, \nu)$ be the synchronous product net.
    Let $SP_{net\setminus \nu}$ be the synchronous product net without $\nu$.
    A binding sequence $\sigma = \langle (t_1, b_2), ..., (t_n, b_n) \rangle \in B^*$ is valid for $SP_{net}$ if it is valid for $SP_{net \setminus \nu}$ and for every $(t_i, b_i) \in \sigma$ with $t_i \in T_{sny}$ that is connected to an arc with a variable assigned by $\nu$ it holds that arcs with the same variable consume the same object:\\
    $\forall_{p_{DJ}\in P_{DJ}, p_{PX} \in P_{PX}, \{p_{DJ}, p_{PX}\} \subseteq \bullet(t_i)}
    (\nu((p_{PX}, t_i)) = \nu((p_{DJ}, t_i))
    \Rightarrow
    \{orob(o) | o \in b_i(pt((p_{PX}, t_i))) \}
    \cap b(pt((p_{DJ}, t_i))) \neq \emptyset)$.
    
\end{definition}

The resulting synchronous Petri net for the running example can be seen in \autoref{fig:syn-pn}.

\section{Alignments from Synchronous Product Net}
\label{sec:alignm-from-syn-prod-net}
This section describes the second part of our approach to finding object-centric alignments.
As an input for this part, we have a synchronous product net in which every transition relates to a move of an alignment.
All binding sequences from the initial marking to the final marking in the synchronous product net relate to an alignment.
In this section, we now want to find an optimal alignment given a cost function for moves.
Every binding in the synchronous product net directly relates to a move.
A binding defines the transition and the used objects.
As described in \autoref{sec:sny-prod-net} the transitions already describe the type of move with $(\gg, t)$ being a model move, $(t, \gg)$ being a log move, and $(t_{PX}, t_{DJ})$ being a synchronous move.
The activity is also defined by the transition from the binding.
The objects used in the binding are the objects of the move.

Therefore, searching for an optimal alignment relates to searching for an optimal binding sequence from the initial to the final marking of the synchronous product net.
Interpreting markings as nodes and bindings as edges between markings, we can set up the search space as a graph.
The cost of the edges is then the given cost function applied to the move that relates to the binding.
This is a well-defined search problem we can solve with standard search algorithms for finding the cheapest or shortest path in a graph.

\begin{definition}[Weighted State Space Graph]\\
    Let $SP_{net} = (((P_{SP}, T_{SP}, F_{SP}, l_{SP}), pt_{SP}, F_{var_{SP}}), M_{init_{SP}}, M_{final_{SP}}, \nu)$ be a synchronous product net.
    Let $B_{SP}$ be all possible bindings in $SP_{net}$.
    The weighted search space graph for that Petri net is $G_{wg} = (V_{wg}, E_{wg}, w_{wg})$ with $V_{wg} \subseteq \mathcal{B}(\mathcal{Q}_{ON_{SP}})$, $E_{wg} \subseteq \mathcal{B}(\mathcal{Q}_{ON_{SP}}) \times \mathcal{B}(\mathcal{Q}_{ON_{SP}}) $ and $w_{wg}: E_{wg} \to \mathbb{R}$ such that:

    \begin{itemize}
        \item $V_{wg} = \{ M \in \mathcal{B}(\mathcal{Q}_{ON_{SP}})| \exists_{\sigma \in B_{SP}^*} M_{init_{SP}} \xrightarrow{\text{$\sigma$}} M \}$ is the set of all reachable markings in the synchronous product net.
        \item $E_{wg} = \{ (M, M')\in \mathcal{B}(\mathcal{Q}_{ON_{SP}}) \times \mathcal{B}(\mathcal{Q}_{ON_{SP}}) | \exists_{(t, b) \ in B_{SP}} M \xrightarrow{\text{$(t,b)$}} M'\}$ connects each marking to its directly reachable markings.
        \item $w_{wg}(M, M') = cost_{move}(a_{log}, o_{log}, t_{model}, o_{model})$ where $(t,b) \in B_{SP}$ with $M \xrightarrow{\text{$(t,b)$}} M'$ and $t = (t_{PX}, t_{DJ}) \in T_{SP}$ so that:
        \begin{itemize}
            \item $a_{log} = l_{SP}(t_{PX})$; $t_{model} = t_{DJ}$
            \item $o_{log} = range(b)$ if $a_{log} \neq \gg$ else $o_{log} = \emptyset$
            \item $o_{model} = range(b)$ if $a_{model} \neq \gg$ else $o_{model} = \emptyset$
        \end{itemize}
    \end{itemize}
\end{definition}

\begin{theorem}
\label{theo:finite}
Let $P_X$ be a process execution with a finite number of objects and events and a let $DJ_{net} = (((P_{DJ}, T_{DJ}, F_{DJ}, l_{DJ}), pt_{DJ}, F_{var_{DJ}} = \emptyset), M_{init_{DJ}}, \allowbreak M_{final_{DJ}})$ be a de-jure net with a finite number of reachable markings from the initial marking. The synchronous product net $SP_{net}$ for $P_X$ and $DJ_{net}$ has a finite number of reachable markings.
\end{theorem}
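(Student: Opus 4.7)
The plan is to reduce finiteness of the reachable marking set of $SP_{net}$ to finiteness of the reachable marking sets of its two constituent nets, $PX_{net}$ and $DJ_{net}$, via a projection argument. Define projections $\rho_{PX}$ and $\rho_{DJ}$ that send a marking $M \in \mathcal{B}(\mathcal{Q}_{ON_{SP}})$ to its restrictions to tokens on places in $P_{PX}$ and $P_{DJ}$, respectively. Because $P_{SP} = P_{PX} \cup P_{DJ}$ is a disjoint union (the two nets use freshly generated types via $new_{type}$), the map $M \mapsto (\rho_{PX}(M), \rho_{DJ}(M))$ is injective, so it suffices to bound the image.

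Next I would show by induction on the length of a firing sequence from $M_{init_{SP}}$ that for every reachable $M \in V_{wg}$, $\rho_{PX}(M)$ is reachable in $PX_{net}$ and $\rho_{DJ}(M)$ is reachable in $DJ_{net}$. The base case follows since $M_{init_{SP}} = M_{init_{PX}} \cup M_{init_{DJ}}$. For the inductive step, every transition $t \in T_{SP}$ has one of three forms: a log transition $(t_{PX}, \gg)$ whose incident arcs, by the definition of $F_{SP}$, touch only places in $P_{PX}$ and replicate exactly $F_{PX}$ around $t_{PX}$; a model transition $(\gg, t_{DJ})$ that analogously simulates $t_{DJ}$ in $DJ_{net}$; or a synchronous transition $(t_{PX}, t_{DJ}) \in T_{syn}$ whose arcs decompose into the disjoint union of those of $t_{PX}$ and $t_{DJ}$. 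Hence any firing in $SP_{net}$ induces a (possibly empty) firing in each component net under the corresponding projection. The $\nu$-requirements only further restrict which bindings are valid and therefore do not enlarge the reachable set compared to $SP_{net \setminus \nu}$, so this argument applies.

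It remains to show that $PX_{net}$ itself has only finitely many reachable markings. This holds because $PX_{net}$ is acyclic: its places $p_{o;s}$, $p_{o;i}$, and $p_{o;e}$ encode the directly-follows relation along each object's finite trace, and for any $t_e \in T$ the only transition producing tokens into $\bullet t_e$ would be another transition $t_{e'}$ with $(e', e)$ an edge in the DAG $P_X$. Thus every $t_e$ can fire at most once, and every reachable marking of $PX_{net}$ is determined by the downward-closed subset of already-fired events of $E_X$, of which there are at most $2^{|E_X|}$. Combined with the hypothesis that $DJ_{net}$ has finitely many reachable markings, the image of the injective projection is contained in the finite product, proving the claim.

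The main obstacle I expect is the careful verification of the simulation claim in the inductive step: matching arc multiplicities of $F_{SP}$ restricted to each side with $F_{PX}$ and $F_{DJ}$, and checking that the objects consumed and produced by a binding $(t, b)$ in $SP_{net}$ yield a legitimate binding in the relevant component net (in particular, that for synchronous transitions the $\nu$-induced identification of object instances is consistent with a well-formed binding on both sides). Once this bookkeeping is in place, the product bound follows immediately.
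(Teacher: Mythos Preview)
Your proposal is correct and follows essentially the same product-decomposition strategy as the paper: both argue that every reachable marking of $SP_{net}$ projects to a reachable marking of $PX_{net}$ and one of $DJ_{net}$, so that finiteness of the product suffices. The only notable difference is in how finiteness of $PX_{net}$ is established---the paper observes that each transition is token-preserving over a finite place set (giving the bound $(e+1)^o$), whereas you use acyclicity to argue that each $t_e$ fires at most once (giving $2^{|E_X|}$); both arguments are sound.
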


\begin{proof}[\hypertarget{proof:finite-search-space}{Finite Search Space}]
Let $P_X = (E_X, D_X)$ be a process execution with a finite number of  objects $o=|X|$ and events $e=|E_X|$. Let $DJ_{net} = (((P_{DJ}, T_{DJ}, \allowbreak F_{DJ}, l_{DJ}), pt_{DJ}, \allowbreak F_{var_{DJ}} = \emptyset), M_{init_{DJ}}, M_{final_{DJ}})$ be a de-jure model for the process execution with a finite number of reachable markings $m_{DJ} = |\{ M | \exists_{\sigma \in B_{DJ}^*} \allowbreak M_{init_{DJ}(PX)} \xrightarrow{\text{$\sigma$}} M\}|$ with $B_{DJ}$ being the set of possible bindings in $DJ_{net}$.

Step 1 of our method creates the process execution net $PX_{net} = (((P_{PX}, T_{PX}, \allowbreak F_{PX}, l_{PX}), \allowbreak pt_{PX},\allowbreak F_{var_{PX}}), M_{init_{PX}},\allowbreak M_{final_{PX}})$ from the process execution.
Let $B_{PX}$ be the set of all bindings in $PX_{net}$.
At most one place is added per event for each object together with one additional start place for each object.
Since the number of events and objects is finite the number of places in $PX_{net}$ is finite.
Each transition produces as many tokens as it consumes because it is connected to one output place for each input place.
So the number of tokens stays consistent all the time.
Therefore, the number of reachable markings $m_{PX} = |\{ M | \exists_{\sigma \in B_{PX}^*} M_{init_{PX}} \xrightarrow{\text{$\sigma$}} M\}| \leq (e+1)^{o}$ in $PX_{net}$ is finite.

Step 2 of our method preprocesses $DJ_{net}$.
Thereby, no places are added.
Transitions with variable arcs are replaced by a set of transitions that can only reach markings the original transition was able to reach as well.
Thus, the number of reachable markings stays the same and is therefore finite.

In Step 3, $DJ_{net}$ and $PX_{net}$ are merged together to form the synchronous product net.
No places are added and the synchronous transitions that are added, do not add new reachable markings because there always exists one transition from $DJ_{net}$ and one from $PX_{net}$ that together have the same effect on the markings of the net.
Therefore, the number of markings in the synchronous product net $SP_{net}$ for $P_X$ and the preprocessed $DJ_{net}$ is the product of the two nets: $m_{SP} = |\{ M | \exists_{\sigma \in B_{SP}^*} M_{init_{SP}} \xrightarrow{\text{$\sigma$}} M\}| = m_{DJ} * m_{PX}$ with $B_{SP}$ being the set of all possible bindings in $SP_{net}$.
Since $m_{DJ}$ and $m_{PX}$ are finite, the synchronous product net has a finite number of reachable markings.
\end{proof}

As stated in \autoref{theo:finite} and shown in \hyperlink{proof:finite-search-space}{the finite search space proof}, the weighted search space graph is finite given that the event log and the de-jure model have a finite size.
For a finite weighted graph, it is decidable whether a cheapest path exists, and if one exists, algorithms like the Dijkstra algorithm \cite{dechter_generalized_1985} can find them.
The only case in which there is no cheapest path is when there is no path at all. 
This means that the synchronous product net cannot reach the final state from the initial state. 
This can only occur if the de-jure model has no option to complete for the set of objects of the process execution. 
In other words, the de-jure model does not contain any allowed behavior related to the process execution. 
In this case, there can be no alignment and the user is informed that the de-jure model does not match the process execution.
This is similar to traditional alignments with a de-jure model that has no option to complete.
In the normal case where the de-jure model describes behavior related to the given process execution, an optimal alignment is determined with the help of the shortest path.

The resulting shortest path is a binding sequence from the initial to the final marking.
It relates directly to an object-centric alignment for the given process execution and the de-jure model.
Using the synchronous product net in \autoref{fig:syn-pn} the search for an optimal alignment results in the alignment in \autoref{fig:alignment}.
Thereby respecting the inter-object dependencies between \textit{package} and \textit{item}.
Also, object instances are correctly separated by the $\nu$-net requirements, so that deviations of different object instances can not compensate another.
That can be seen in \autoref{fig:alignment} where the missing \textit{add sample} activity for \textit{i1} and the additional \textit{add sample} activity for \textit{i2} are both identified as deviations.

There can be multiple binding sequences with the same cost.
It is guaranteed that one of the cheapest is found, but if there are multiple binding sequences with the same cost, it depends on the implementation which one is found.


\section{Evaluation}
\label{sec:evaluation}
We conducted an evaluation with real-world data from the BPI2017 challenge \cite{van_dongen_bpi_2017}.
The evaluation is split into a qualitative part and a quantitative evaluation of the run time.

\subsection{Qualitative}

The purpose of the qualitative evaluation is to evaluate whether the proposed approach does indeed give better insights into complex processes than existing approaches.
In the following, we will compare the presented object-centric alignment approach to traditional alignments \cite{adriansyah_aligning_2014} on a flattened event log \cite{aalst_discovering_2020}.

We used data from BPI2017 \cite{van_dongen_bpi_2017} for our real-world evaluation.
It is one of the few publicly available event logs that can be turned into an object-centric event log.
The two object types of that process are \textit{application} and \textit{offer}.
All variants have exactly one application and at least one offer.
An application can be canceled or an offer is accepted.

For this evaluation, we selected only the 4 most dominant variants that made up 19.6\% of process executions.
For simplicity and clarity, we removed the activities \textit{Submit}, \textit{Complete}, and \textit{Accept} because they only appear in the beginning before the processes for \textit{application} and \textit{offer} interact.
We discovered the de-jure Petri net from those 4 variants using the python library ocpa \cite{adams_ocpa_2022} which implements the discovery approach from van der Aalst and Berti \cite{aalst_discovering_2020}.
The implementation of our approach also uses ocpa \cite{adams_ocpa_2022}.
In the next step, we introduced noise in the log by removing and replacing events in the given process execution.

\begin{figure}[t]
    \centering
    \includegraphics[width=0.9\textwidth]{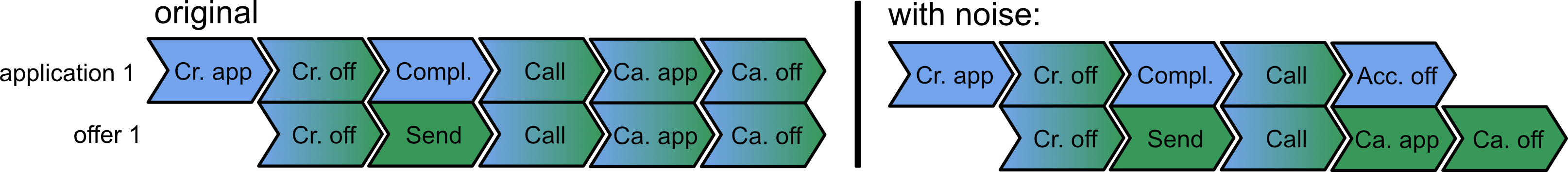}
    \caption{A process execution from \cite{van_dongen_bpi_2017} and a variation of it that has some noise.}
    \label{fig:ev-qual-setup}
\end{figure}

The original accepted process execution and the process execution with noise can be seen in \autoref{fig:ev-qual-setup}.
In this example, the activity \textit{Cancel application} is removed from the application trace, as well as the activity \textit{Cancel offer}. 
Instead of \textit{Cancel offer} the activity \textit{Accept offer} was recorded for \textit{application 1}.
In the trace of \textit{offer 1}, nothing was changed.
An error like that could be introduced to a real process by human mistakes.
The change created multiple deviations from the wanted behavior described by the de-jure model.
For example, it is not possible to accept an application without accepting an offer.
This is an inter-object dependency we would want alignments to detect.
Also, the events \textit{Cancel application} and \textit{Cancel offer} are now only recorded for an \textit{offer} but are not connected with an \textit{application}. 
Therefore, this represents unwanted behavior as well.

We applied the traditional alignment approach \cite{adriansyah_aligning_2014} after flattening \cite{aalst_discovering_2020} the object-centric process execution and the object-centric Petri net.
This results in the two alignments.
We also applied our object-centric alignment approach to the described evaluation data.
Both the traditional and object-centric alignment can be seen in \autoref{fig:ev-both-algn}.

\begin{figure}[t]
    \centering
    \includegraphics[width=0.75\textwidth]{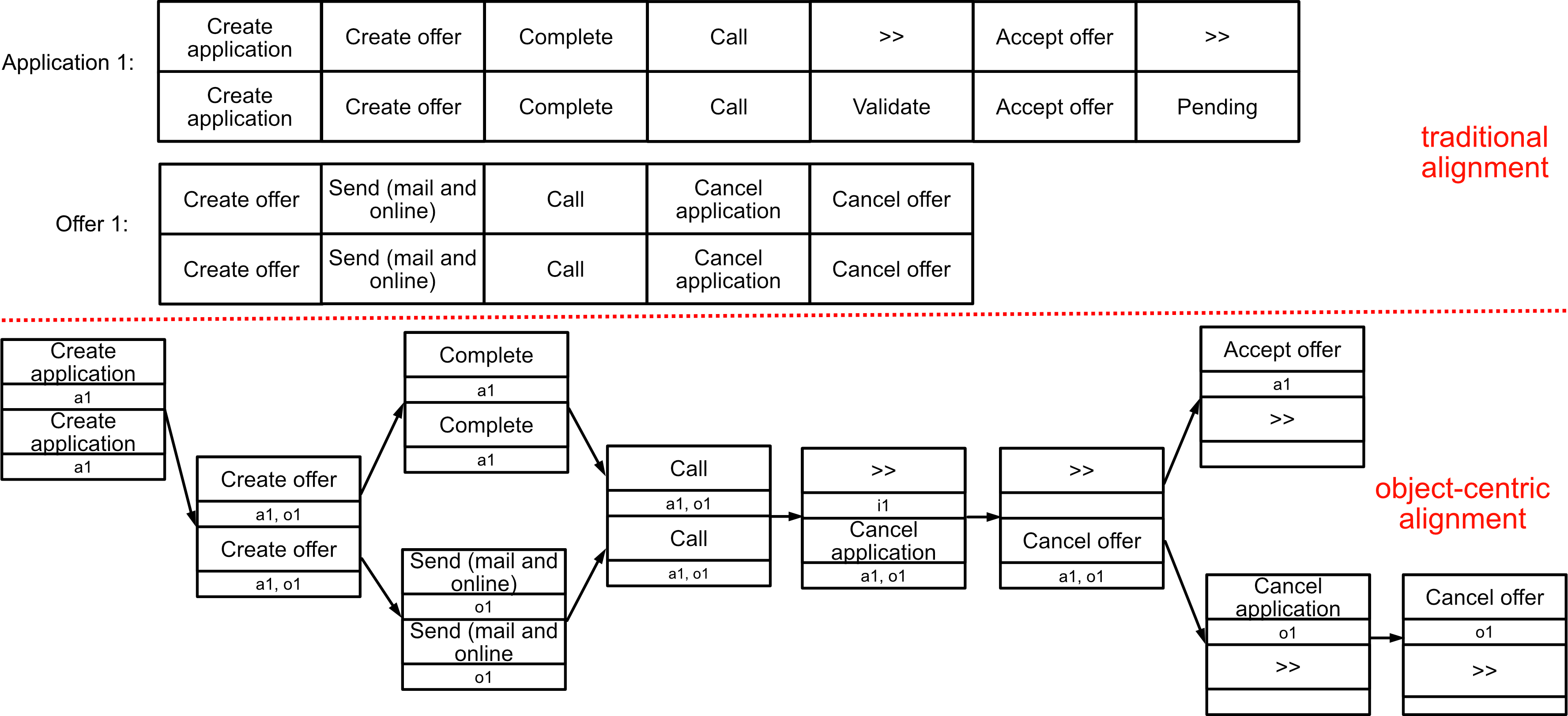}
    \caption{Traditional and object-centric alignments for the variant with noise and the evaluation de-jure model.}
    \label{fig:ev-both-algn}
\end{figure}

Comparing those two, we can see big differences in the diagnostics.
The biggest difference is that the traditional alignment did not detect that \textit{application 1} should not have been accepted.
The reason for that is that with the traditional approach, the alignment for \textit{application 1} is computed without any knowledge about any offers.
Therefore, it has to assume that activity \textit{Accept Offer} was a valid activity, where instead there was no matching offer to accept.
This missing information leads to the traditional alignment suggesting that activities \textit{Validate} and \textit{Pending} are missing in the log, further supporting the understanding that this application should have been accepted.
Our approach considers all dependencies between objects and therefore identifies \textit{Accept offer} as a log move.
Moreover, the traditional alignment does not indicate any control flow deviation for \textit{offer 1}.
This creates a contradiction between the alignment of \textit{offer 1} and \textit{application 1} because an application can not be accepted and canceled.
Contradictions like that can not occur in an object-centric alignment because the whole process execution is considered at once.

\subsection{Quantitative}

To evaluate the scalability of our approach, we performed a quantitative analysis of the run time.

\subsubsection{Evaluation Setup}
As the data source we used BPI2017 event data \cite{van_dongen_bpi_2017}.
Only the most frequent 50\% of activities were used.
All other activities were filtered out.
Afterward, the log consisted of 755 variants.
We used a Petri net designed so that the given log contains some dis-aligned process executions.
The used Petri net is available on GitHub\footnotemark[1].
It has 6 visible transitions and 4 silent transitions.
There are 10 places in the net.
We aligned all the 755 variants with the model and tracked their properties and the resulting alignment calculation time.
The raw results of that evaluation can be found at GitHub\footnotemark[1].
The evaluation was performed on a 3.1 GHz Dual Core Intel Core i5 with 8GB of RAM.

\begin{figure}[h]
    \centering
    \includegraphics[width=0.65\textwidth]{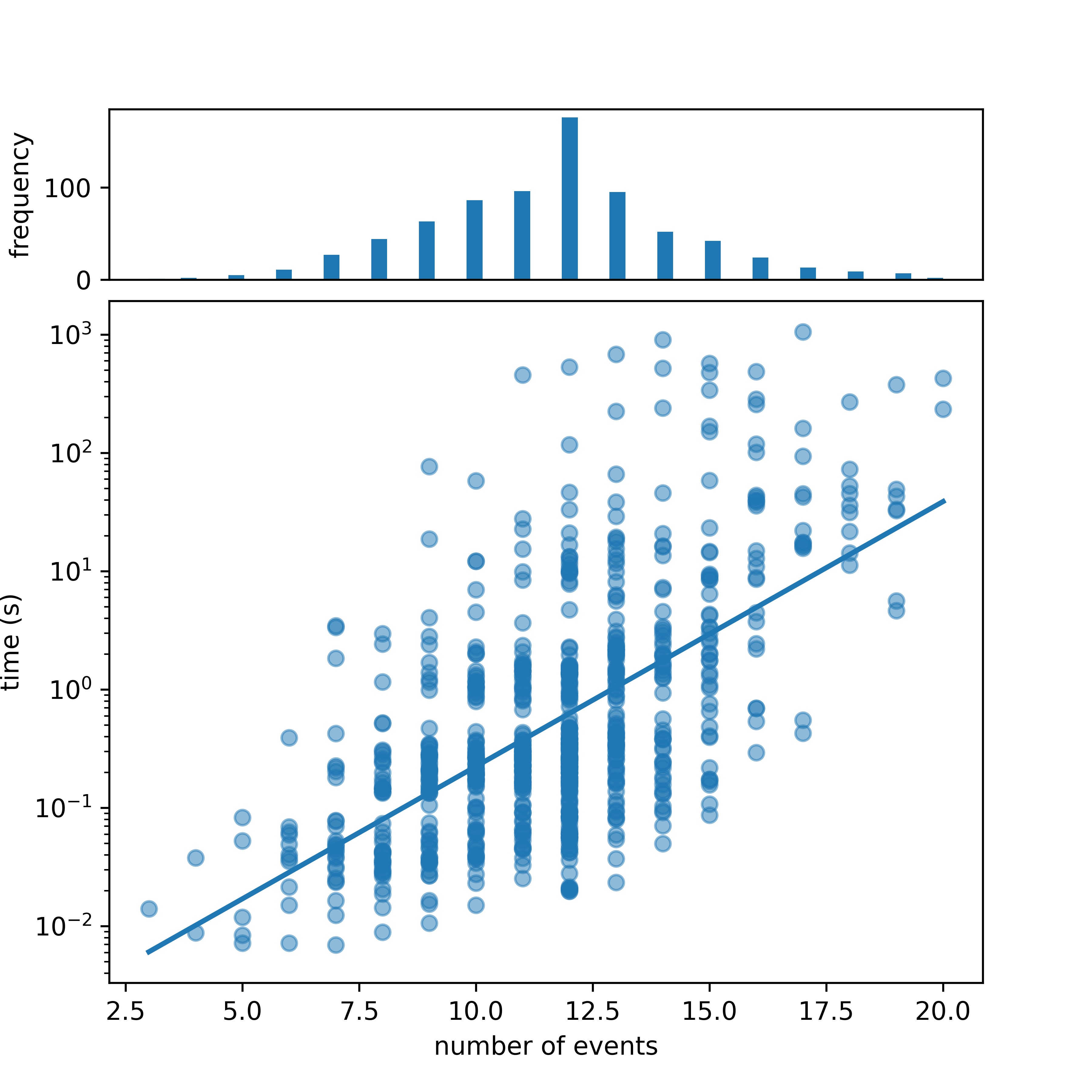}
    \caption{Execution time on a logarithmic scale over the number of events.}
    \label{fig:time-events}
\end{figure}

\subsubsection{Results}

Within the 755 calculated alignments, the cost varied from 0 to a maximum of 5.
The number of events spanned 3 events up to 20 events with the majority of process execution having 11 to 14 events.
The object instances involved started with 2 and went up to 7 instances.
The distribution of these attributes can be seen in \autoref{fig:time-events}, \autoref{fig:time-obj}, and \autoref{fig:ev-time-cost}.
The shortest calculation time was 0.007 seconds and the longest was 1051.8 seconds.
One can see that near the limits of the value range for all attributes, there are fewer data points, which makes the results for these values less resistant to outliers.
We computed the correlation coefficient for each pair of attributes.
The number of events and the number of objects show a positive correlation of 0.59 whereas the number of events and the cost show a negative correlation of -0.38.
Especially the latter one is surprising because one would expect more deviations and therefore a higher cost for process variants with more events.

We plotted the calculation time over the three dimensions \textit{number of events} (\autoref{fig:time-events}), \textit{number of objects} (\autoref{fig:time-obj}), and cost (\autoref{fig:ev-time-cost}).
To reduce the effect of the negative correlation, we grouped the data points in \autoref{fig:ev-time-cost} by their number of events.
Note that the time scale for the time is logarithmic in all three plots and, therefore, the linear increase represents exponential growth.
For all three dimensions, the calculation time grows exponentially.


\begin{figure}[h]
    \centering
    \includegraphics[width=0.65\textwidth]{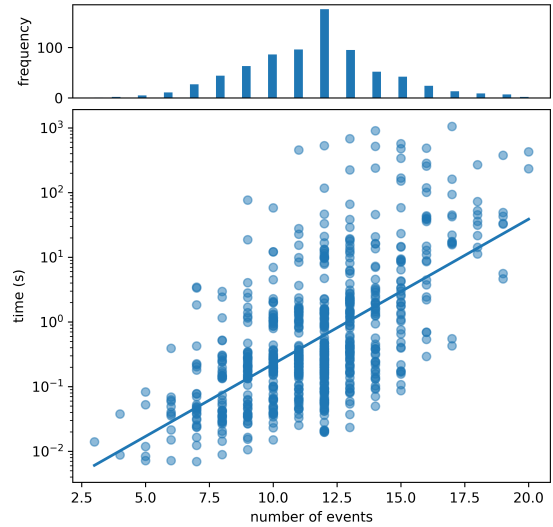}
    \caption{Execution time on a logarithmic scale over the number of objects.}
    \label{fig:time-obj}
\end{figure}

\begin{figure}[h]
    \centering
    \includegraphics[width=0.65\textwidth]{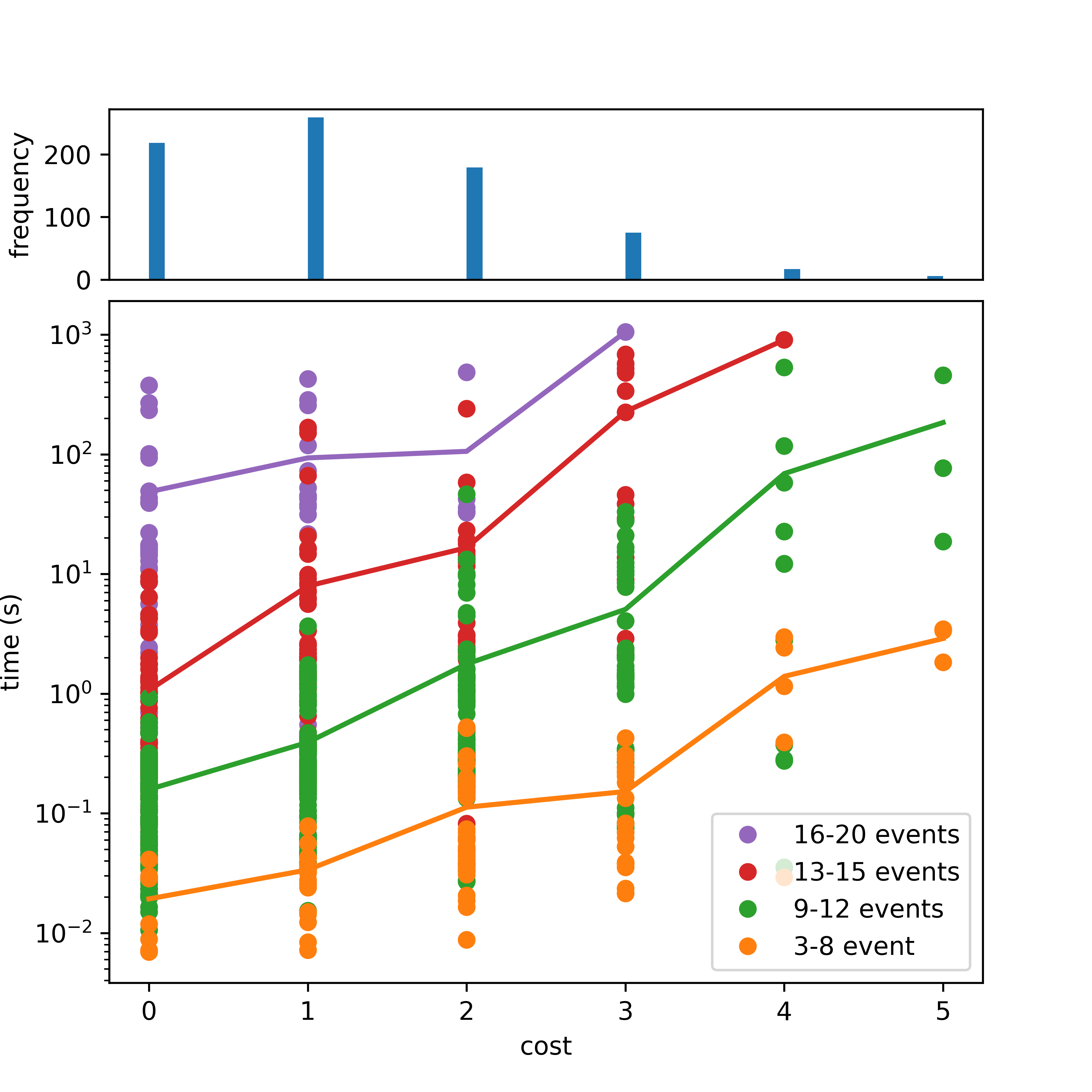}
    \caption{Execution time on a logarithmic scale over alignment cost, grouped by the number of events.}
    \label{fig:ev-time-cost}
\end{figure}

This result can be explained by the structure of the search space.
More events increase the size of the process execution net and can also increase the number of synchronous transitions in the synchronous product net.
Thus there are more possible markings which leads to a bigger search space.
More objects also increase the size of the process execution net and add a lot of parallel behavior, thereby also increasing the search space.
The run time of the Dijkstra algorithm is on average exponential in the size of the search space \cite{dijkstra_note_1959}.
Therefore, the computation time grows exponentially for the number of events and objects.
For an alignment with a higher cost, a bigger portion of the search space is explored because Dijkstra explores all vertices that are reachable by a path that is cheaper than the cheapest path to the final marking.
As a result, the run-time grows exponentially for the cost of the alignment.

\section{Conclusion}
\label{sec:conclusion}
This paper presented four contributions for conformance checking in object-centric process mining.
First, we defined object-centric alignments generalizing traditional alignments to graphs of moves.
Second, we provided an algorithm to calculate optimal alignments on an object-centric log and Petri net.
The two-step approach creates a synchronous product net and searches for the optimal binding sequence from initial marking to final marking.
Third, we implemented this algorithm using the open-source library \textsc{ocpa} \cite{adams_ocpa_2022} and made it publicly available on GitHub\footnotemark[1].
Finally, we performed an evaluation of the presented approach.
The qualitative evaluation shows the advantages of object-centric alignments for deviation diagnostics.
Complex inter-object dependencies are lost when flattening the event data, leading to contradictions in the alignment. Using object-centric alignments, we preserve these dependencies and avoid contradictions.
Our quantitative evaluation indicates an exponential computation time in the number of object instances and cost of the alignment.
This suggests, that an alignment of a whole object-centric event log to a moderately fitting model might be too time-consuming.
In those scenarios, one might use object-centric alignments to get specific diagnostics for individual process executions or variants.

There are two directions for future work based on object-centric alignments.
On the one hand, one can investigate lifting restrictions of the current approach.
For example, the assumption of a fixed object set could be dropped, allowing for the approach to introduce completely new objects that might be missing.
On the other hand, one can work towards decreasing the complexity and run time: Heuristics, using the $A^*$ algorithm, and defining relaxations of the problem are all promising directions to decrease the computation time.

%
%
%
 \bibliographystyle{splncs04}
 \bibliography{bibtex-entries}
%

\end{document}